\def\R{\mathbb{R}}
\def\Q{\mathbb{Q}}
\def\Z{\mathbb{Z}}
\def\P{\mathbb{P}}
\def\bigspace{\,\,\,\,\,\,\,\,}
\DeclareMathOperator*{\argmax}{arg\,max}
\DeclareMathOperator*{\argmin}{arg\,min}
\newtheorem{thm}{Theorem}
\newtheorem{example}[thm]{Example}
\newtheorem{lemma}[thm]{Lemma}
\newtheorem{corollary}[thm]{Corollary}
\newtheorem{definition}{Definition}
\newtheorem*{remark}{Remark}
\newcommand{\thesistitle}{Justifications for Generalizations of Approval Voting}
\newcommand{\thesisauthor}{Hari Sarang Nathan}
\newcommand{\thesisadvisor}{Professor Michael E. Orrison}
\newcommand{\thesisadvisorU}{Harvey Mudd College}
\newcommand{\graddate}{May 2023}
 \colorlet{shadecolor}{lightgray}
\newenvironment{shadedquotation}
 {\begin{shaded*}
  \quoting[leftmargin=0pt, vskip=0pt]
 }
 {\endquoting
 \end{shaded*}
}
\begin{document}

\begin{center}
{\large\textbf{\thesistitle}}
\vspace{.2in}

by

\vspace{.1in}
\thesisauthor \\~\\~\\
\centering

A dissertation submitted in partial fulfillment\\~\\~\\ 
of the requirements for the degree of\\~\\~\\ 
Master of Science (M.S.) in Mathematics\\~\\~\\ 
Courant Institute of Mathematical Sciences\\~\\~\\ 
New York University\\~\\~\\ 
\graddate\\~\\~\\~\\~\\~\\~\\~\\~\\~\\~\\~\\
\end{center}

\begin{tabular}{@{}p{4in}@{}}
\hrulefill \\
\thesisadvisor \\
\thesisadvisorU
\end{tabular}

\pagebreak

\tableofcontents

\pagebreak

\addcontentsline{toc}{section}{Acknowledgments}
\section*{Acknowledgements}
I am greatly indebted to my advisor, Professor Michael E. Orrison for helping guide me on this thesis as well as Professor Katharine Shultis and Professor Jessica Sorrells for their ideas and support throughout this work. I am also grateful to Professor Vlad Vicol, Professor Sylvain Cappell, and Professor Ben Blum-Smith for their encouragement and in helping me find my thesis advisor. Finally, I could never have finished this without the support of my mother, Ms. Pushpa Nathan.

\addcontentsline{toc}{section}{Abstract}
\section*{Abstract}
Approval voting is a common method of preference aggregation where voters vote by ``approving'' of a subset of candidates and the winner(s) are those who are approved of by the largest number of voters. In approval voting, the degree to which a vote impacts a candidate's score depends only on if that voter approved of the candidate or not, i.e., it is independent of which, or how many, other candidates they approved of. Recently, there has been interest in satisfaction approval voting and quadratic voting both of which include a trade-off between approving of more candidates and how much support each selected candidate gets. Approval voting, satisfaction approval voting, and quadratic voting, can all be viewed as voting where a vote is viewed as analogous to a vector with a different unit norm ($\mathcal{L}^{\infty}$, $\mathcal{L}^{1}$, and $\mathcal{L}^2$ respectively). This suggests a generalization where one can view a vote as analogous to a normalized unit vector under an arbitrary $\mathcal{L}^p$-norm. In this paper, we look at various general methods for justifying voting methods and investigate the degree to which these serve as justifications for  these generalizations of approval voting.

\section{Introduction}
Approval voting is a common method of preference aggregation where voters vote by ``approving'' of a subset of candidates and the winner(s) are those who are approved of by the most voters. In approval voting, the degree to which a vote impacts a candidate's score depends only on if that voter approved of the candidate or not, i.e., it is independent of which, or how many, other candidates they approved of. However, other work has investigated the advantages of alternatives where the contribution to a candidates depends on the number of candidates approved.

In particular, \cite{Brams2010-cr} introduces ``satisfaction approval voting'' where a voter contributes a total of a single point divided equally among all the candidates they approved of. In addition, there has been recent interest in ``quadratic voting'' (e.g., \cite{Posner2017-ny}, \cite{weyl-glen-2013}) where (in some forms) each voter has a fixed pool of points and the contribution to a candidate is the square root of the number of points spent on that candidate (e.g., \cite{Quarfoot2017-bd}).

Approval voting, satisfaction approval voting, and quadratic voting, can all be viewed as voting where a vote is viewed as a unit vector with a different norm ($\mathcal{L}^{\infty}$, $\mathcal{L}^{1}$, and $\mathcal{L}^2$ respectively). This suggests a generalization where one can view a vote as a unit vector under an arbitrary $\mathcal{L}^p$-norm. In this paper, we study such generalizations which we call $k,p$-approval voting. In particular, voters pick a subset of candidates to approve of and the vote is viewed as analogous to a $0, 1$-vector which is then normalized according to some predetermined ($\mathcal{L}^p$) norm. We look at various general methods for justifying voting methods and investigate the degree to which these serve as justifications for $k,p$-approval voting.

The rest of this paper proceeds as follows. In section 2, we lay out the basic definitions and notation. In section 3, we provide an axiomatic justification of $k,p$-approval voting. In section 4, we look at justifications of $k,p$-approval voting from the perspective of utility maximization and maximum likelihood estimation. In section 5, we look at $k,p$-approval voting from the perspective of distance rationalizability. In section 6, we conclude.

As a note, throughout this paper we use the vocabulary of ``democratic'' elections, i.e., voters, candidates, etc. However, as \cite{Lackner2022-nb} points out, there are a number of other uses for approval voting. They list the following which span numerous fields:\footnote{In the quote below, we omit the citations to various references. The references can be found in \cite{Lackner2022-nb}.}

\begin{shadedquotation}\begin{enumerate}
    \item finding group recommendations where the possible recommendations can be thought of as candidates and individual group members as voters,
    \item collaborative filtering where, for example, related movies are recommended based on large data collections,
    \item diversifying search results where users sending a search query can be interpreted as voters and the possible search results correspond to candidates,
    \item locating public facilities where the candidates are possible locations in which facilities can be built,
    \item the design of dynamic Q\&A platforms, where participants propose and upvote questions to be asked in a Q\&A session,
    \item selecting validators in consensus protocols (blockchain), with the users of the protocol corresponding to both voters and candidates, and
    \item genetic programming, a technique to solve global optimisation problems.
\end{enumerate}
\end{shadedquotation}

\section{Definitions and Notation}
Before we get to the specific definitions we need, we lay out some notation we will use throughout the paper. \begin{itemize}
    \item We use $\mathcal{P}(X)$ to refer to the power set of $X$, i.e., $\mathcal{P}(X) = \{ Y \subset X \}$.
    \item We use $\mathds{1}$ as the indicator function, i.e.,

    $$\mathds{1}_{g(x)} = \begin{cases}
        1 & g(x) \text{ is true} \\
        0 & \text{otherwise}.
    \end{cases}$$
    \item Given two sets, $X$ and $Y$, we use $X^Y$ to refer to the set of all functions from $Y$ to $X$.
    \item Given two sets, $X$ and $Y$, we use $X \setminus Y$ as the set difference i.e. $X \setminus Y = \{x \in X : x \notin Y\}$.
    \item Given a vector $v \in \R^n$ we use $\| v \|_{p}$ as the $\mathcal{L}^p$ norm of the vector, i.e., for $p \in [1, \infty)$

    $$\| v \|_{p} = \left( \sum_{i = 1}^{n}{|v_i|^p} \right)^{1/p}$$

    and, for $p = \infty$,

    $$\| v \|_{\infty} = \lim_{p \to \infty}{\| v \|_{p}} = \max_{i \in \{1, ..., n\}}{|v_i|}.$$
    \item In this paper, we use both $\argmin$ and $\argmax$ as per the standard definitions. Of note, however, is that we treat these as returning the set of all arguments that minimize/maximize the relevant quantity. For example, we have

    $$\argmin_{x \in [0, 10]}{\{x - \lfloor x \rfloor\}} = \{0, 1, ..., 9, 10\},$$

    i.e., \textit{all} the integers in $[0, 10]$.
    \item We write $A \sim B$ to mean ``$A$ is proportional to $B$'' i.e. there is some $c > 0$ such that $A = cB$.
\end{itemize}

We now turn to the abstract definition of a voting system. A voting system is a way to take a finite collection of ballots and aggregate them into an outcome. To specify all of these, we use the following ingredients (\cite{Zwicker2008-xe}).

\begin{definition}
\label{def:aavs_elements}
We define: \begin{itemize}
    \item $\mathcal{O}$ a finite set, the set of \textbf{outputs};
    \item $\mathcal{I}$ a finite set, the set of \textbf{inputs} or \textbf{ballots};
    \item $\pi:\mathcal{I} \to \Z_{\geq 0}$ a \textbf{profile}; and
    \item $\Pi = \left(\Z_{\geq 0}\right)^{\mathcal{I}}$, i.e., the set of all functions from $\mathcal{I}$ to $\Z_{\geq 0}$, the set of \textbf{all profiles};
    \item $\mathcal{F}:\Pi \to \mathcal{P}(\mathcal{O}) \setminus \{ \emptyset \}$ is a \textbf{voting rule}.
\end{itemize}
\end{definition}

\begin{definition}
\label{def:aavs} A tuple $(\mathcal{I}, \mathcal{O}, \mathcal{F})$ is an \textbf{abstract anonymous voting system} (or just a \textbf{voting system}).\footnote{The definition in \cite{Zwicker2008-xe} also includes the ``set of possible outcomes'' which is denoted as  $\mathcal{C}(\mathcal{O})$. However, we find this notation redundant, since $\mathcal{F}(\Pi)$ plays this role and, below, we use $\mathcal{C}$ for the set of candidates.} \end{definition}

The ``anonymous'' part in the definition above refers to the fact that the voting system doesn't care which voters cast which vote, only how many voters cast each vote. This comes from the fact that $\mathcal{F}$ takes profiles, and not lists of votes, as inputs.\footnote{An example of a system that is not anonymous is a dictatorship where multiple people vote but only the dictator's vote counts and every other vote is ignored.} Note that $\mathcal{F}$ may not return a single output. When $|\mathcal{F}(\pi)| > 1$, we interpret this to mean that the various outcomes in $\mathcal{F}(\pi)$ are ``tied'' (at least as far as $\mathcal{F}$ and $\pi$ are concerned).

The above is somewhat abstract; a few examples will help clarify.

\begin{example}
\label{ex:plurality_voting} Perhaps the best-known voting system is \textbf{plurality voting} where each voter votes for a single candidate and the candidate with the most votes wins. In this system: \begin{itemize}
    \item $\mathcal{O} = \{c_1, ..., c_n\}$ is the set of candidates;
    \item $\mathcal{I} = \mathcal{O}$ since each vote is just for a single candidate; and 
    \item $\mathcal{F}$ returns the candidates with the most votes. This can be written as
    $$\mathcal{F}(\pi) = \argmax_{c \in \mathcal{O}}{\{\pi(c)\}}.$$
\end{itemize}

For example, if there are three candidates, $\{c_1, c_2, c_3\}$, we can look at two profiles.

\begin{center}
    \begin{tabular}{c|c|c|c|c}
        Profile & $c_1$ & $c_2$ & $c_3$ & $\mathcal{F}(\pi)$ \\
        \hline\hline
        $\pi_1$ & 10 & 15 & 5 & $\{c_2\}$ \\
        $\pi_2$ & 20 & 15 & 20 & $\{c_1, c_3\}$
    \end{tabular}
\end{center}

Here $\mathcal{F}(\pi_1) = \{c_2\}$ (the clear winner) but $\mathcal{F}(\pi_2) = \{c_1, c_3\}$ since they are tied for winner. If this were being used in a situation where a single clear winner was needed, some tie-breaking methodology would be needed.
\end{example}

\begin{example}
\label{ex:approval_voting}
We now turn to an example of the type we will study in this paper. In \textbf{approval voting} each candidate selects at least one candidate to approve of and the candidate approved of by the largest number voters wins. In this system: \begin{itemize}
    \item $\mathcal{O} = \{c_1, ..., c_n\}$ is the set of candidates;
    \item $\mathcal{I} = \mathcal{P}(\mathcal{O}) \setminus \{ \emptyset \}$ since each voter must approve of at least one candidate but can approve of as many as they like; and 
    \item $\mathcal{F}$ is defined via a function $s:\mathcal{O} \times \Pi \to \R$ where
    $$s(c, \pi) = \sum_{b \in \mathcal{I}}{\pi(b) \cdot \mathds{1}_{c \in b}}$$
    and
    $$\mathcal{F}(\pi) = \argmax_{c \in \mathcal{O}}{\{s(c, \pi)\}}.$$
\end{itemize}
The function $s$ is an example of what we will call a ``score function'' (which we will define formally below). Here, the score of each candidate is number of voters who approved of them and $\mathcal{F}$ returns all the candidates with maximum score. 

Again, we use a three candidate example with $\{c_1, c_2, c_3\}$ and look at two examples.

\begin{center}
    \begin{tabular}{c|c|c|c|c|c|c|c}
        Profile & $\{c_1\}$ & $\{c_2\}$ & $\{c_3\}$ & $\{c_1, c_2\}$ & $\{c_1, c_3\}$ & $\{c_2, c_3\}$ & $\{c_1, c_2, c_3\}$ \\
        \hline
        $\pi_1$ & 2 & 0 & 0 & 2 & 0 & 3 & 0 \\
        $\pi_2$ & 2 & 0 & 0 & 0 & 0 & 3 & 0
    \end{tabular}
\end{center}

This leads to the following results:

\begin{center}
    \begin{tabular}{c|c|c|c|c}
        Profile & $s(c_1, \pi)$ & $s(c_2, \pi)$ & $s(c_3, \pi)$ & $\mathcal{F}(\pi)$ \\
        \hline
        $\pi_1$ & 4 & 5 & 3 & $\{c_2\}$ \\
        $\pi_2$ & 2 & 3 & 3 & $\{c_2, c_3\}$
    \end{tabular}
\end{center}
\end{example} 

\begin{example}
\label{ex:sat_approval_voting}
An alternative to approval voting is to ``normalize'' votes which approve of more than one candidate. In \textbf{satisfaction approval voting} (\cite{Brams2010-cr}) $\mathcal{O}$ and $\mathcal{I}$ are the same as in example \ref{ex:approval_voting} above but we modify $s$ such that

$$s(c, \pi) = \sum_{b \in \mathcal{I}}{\frac{\pi(b) \cdot \mathds{1}_{c \in b}}{| b |}}.$$

This can change the outcome of an election. For example, for the profiles in the example \ref{ex:approval_voting}:

\begin{center}
    \begin{tabular}{c|c|c|c|c}
        Profile & $s(c_1, \pi)$ & $s(c_2, \pi)$ & $s(c_3, \pi)$ & $\mathcal{F}(\pi)$ \\
        \hline
        $\pi_1$ & 3 & 2 & 1.5 & $\{c_1\}$ \\
        $\pi_2$ & 2 & 1.5 & 1.5 & $\{c_1\}$
    \end{tabular}
\end{center}
\end{example}

With these definitions and examples in hand, we now turn to defining the specific objects of our study. In particular, we will study generalizations of approval voting and satisfaction approval voting. We generalize these in two ways. First, we allow for a broader range of score functions. Second, we allow for the possibility of selecting a committee of multiple candidates, not just a single candidate.

\begin{definition}
\label{def:p_aavs}
For $k \in \Z_{> 0}$ and $p \in [1, \infty]$, a \textbf{$k,p$-approval voting system}, $(\mathcal{I}, \mathcal{O}, \mathcal{F}_{k,p})$, is an abstract anonymous voting system where: \begin{itemize}
    \item $k$ is the \textbf{committee size};
    \item there are $n$ \textbf{candidates}, i.e., $\mathcal{C} = \{c_1, ..., c_n\}$;
    \item the possible outcomes are committees of size $k$, i.e., $\mathcal{O} = \{K \subset \mathcal{C} : |K| = k\}$;
    \item the ballots are \textbf{approval ballots}, i.e., $\mathcal{I} = \mathcal{P}(\mathcal{C}) \setminus \{ \emptyset \}$;\footnote{Everything here can be equally well formulated if we allow $\mathcal{I} = \mathcal{P}(\mathcal{C})$, i.e., if we allow a voter to ``approve of nobody.'' We only disallow it here because we will be dividing by $|b|^{1/p}$ and we would need to create special cases to avoid diving by zero.}
    \item the \textbf{score function} is $s_p:\mathcal{C} \times \Pi \to \R$ defined by as follows.\footnote{The inspiration for these score functions comes from thinking of the ballots as $0-1$ vectors in $\R^{\mathcal{C}} \cong \R^n$. For each $b \in \mathcal{I}$, we can consider the vector $\vec{b}$ where $\vec{b}_i = 1$ if $c_i \in b$ and $b_i = 0$ otherwise. In this case, we can see that $\|\vec{b}\|_p = |b|^{1/p}$ and we are, in effect, normalizing each $\vec{b}$ by its $\mathcal{L}^p$ norm.}  For $p < \infty$,

    $$s_p(c, \pi) = \sum_{b \in \mathcal{I}}{\frac{\pi(b) \cdot \mathds{1}_{c \in b}}{| b |^{1/p}}}$$

    and, for $p = \infty$,

    $$s_{\infty}(c, \pi) = \lim_{p \to \infty}{s_p(c, \pi)} = \sum_{b \in \mathcal{I}}{\pi(b) \cdot \mathds{1}_{c \in b}};$$

    and
    \item $\mathcal{F}_{k,p}$ is defined as

    $$\mathcal{F}_{k,p}(\pi) = \argmax_{K \in\mathcal{O}}{\left\{ \sum_{c \in K}{s_p(c, \pi)} \right\}}.$$
\end{itemize}
\end{definition}

\begin{example}
Using the definition above, $1,\infty$-approval voting is approval voting, $1,1$-approval voting is satisfaction approval voting. In addition, $1,2$-approval voting is quadratic voting.
\end{example}

\begin{example}
Changing $p$ can dramatically change the outcome. We look at the following profile, $\pi$:

\begin{center}
\begin{tabular}{c|c|c|c|c|c}
    $\{c_1\}$ & $\{c_2\}$ & $\{c_3\}$ & $\{c_1, c_2\}$ & $\{c_1, c_3\}$ & $\{c_2, c_3\}$\\
    \hline \hline
    800 & 600 & 122 & 100 & 622 & 966
\end{tabular}
\end{center}

With $k = 1$ and for $p$ as $1$, $2$, and $\infty$ we get the following results:

\begin{center}
\begin{tabular}{c|c|c|c|c}
    $p$ & $s_p(c_1, \pi)$ & $s_p(c_2, \pi)$ & $s_p(c_3, \pi)$ & $\mathcal{F}_{1,p}(\pi)$\\
    \hline \hline
    $1$ & $1161$ & $1133$ & $916$ & $\{\{c_1\}\}$ \\
    $2$ & $1310.53$ & $1353.78$ & $1244.89$ & $\{\{c_2\}\}$ \\
    $\infty$ & $1522$ & $1666$ & $1710$ & $\{\{c_3\}\}$ \\
\end{tabular}
\end{center}
\end{example}

\section{Axiomatic Justifications}

There is a long tradition, going back to at least Arrow's Impossibility Theorem in \cite{Arrow1950-xw}, of looking for axioms that various voting systems satisfy or that characterize voting systems. Approval voting is no different. Indeed, there are various axiom sets that \textit{uniquely} characterize approval voting. In this section, we look at some of these axioms and discuss how to generalize them to uniquely characterize $k,p$-approval.

Throughout this section, and for the rest of the paper, we will assume both $k$ and $p$ are fixed, $\mathcal{C} = \{c_1, ..., c_n\}$, $\mathcal{O} = \{K \subset \mathcal{C} : |K| = k\}$, etc., as in definition \ref{def:p_aavs} above.

\subsection{Prior Results}

There are several sets of axioms which uniquely characterize approval voting with $k = 1$. We will first present some of these and the relevant characterizations before turning to generalizations of the axioms and new characterizations. 

\begin{remark}
We note that since profiles are just functions, they can be added and scaled. These are defined as $(\pi + \pi')(b) = \pi(b) + \pi'(b)$ and, for $\ell \in \R$, $(\ell \cdot \pi)(b) = \ell \cdot \pi(b)$. 
\end{remark}

Before defining the axioms, we create a preliminary definition. We need to be able to create a profile out of a single ballot.

\begin{definition}
Given $b \in \mathcal{I}$, we define the \textbf{single ballot profile} as $\pi^b:\mathcal{I} \to \R_{> 0}$ where $\pi^b(b') = \mathds{1}_{b' = b}$.
\end{definition}

This means that profiles are just linear combinations of the $\pi^b$ with non-negative coefficients.

We now define some axioms that are used to characterize approval voting. These axioms assume that $\mathcal{F}$ is a voting rule applied to profiles on approval ballots (such as those in definite \ref{def:p_aavs}). The axioms, which we find in \cite{Laslier2010-ki}, are: \begin{itemize}
    \item A voting rule, $\mathcal{F}$, satisfies \textbf{faithfulness} if when there is only one voter, their vote specifies the winners, i.e., for all $b \in \mathcal{I}$, $\mathcal{F}(\pi^b) = \{\{c\} : c \in b\}$.
    
    \item A voting rule, $\mathcal{F}$, satisfies \textbf{consistency} if the outcome of combining two profiles is the intersection of their outcomes (when its not empty), i.e., given two profiles, $\pi, \pi' \in \Pi$,

    $$\mathcal{F}(\pi) \cap \mathcal{F}(\pi') \neq \emptyset \Rightarrow \mathcal{F(\pi + \pi')} = \mathcal{F}(\pi) \cap \mathcal{F}(\pi').$$

    \item A voting rule, $\mathcal{F}$, satisfies \textbf{cancellation} if every candidate wins when they all have the same number of votes, i.e., $\mathcal{F}(\pi) = \mathcal{O}$ when for all $c, c' \in \mathcal{C}$, $s_{\infty}(c, \pi) = s_{\infty}(c', \pi)$.
    \item A voting rule, $\mathcal{F}$, satisfies \textbf{disjoint equality} if $\mathcal{F}(\pi^b + \pi^{b'}) = \{\{c\} : c \in b \cup b'\}$ when $b \cap b' = \emptyset$.
\end{itemize}

With these axioms, for $k = 1$, there are two well-known theorems.

\begin{thm}
\label{thm:axiomatic_approval_cancellation}
(\cite{Alos-Ferrer2006-hj}, \cite{Fishburn1978-pc}) Let $\mathcal{F}$ be a voting rule. $\mathcal{F} = \mathcal{F}_{1,\infty}$ if and only if it satisfies faithfulness, consistency, and cancellation. \end{thm}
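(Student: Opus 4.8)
The plan is to prove the two implications separately. The forward direction---that $\mathcal{F}_{1,\infty}$ satisfies the three axioms---is a direct computation resting on the additivity of the score: for every candidate $c$ and all profiles $\pi,\pi'$ one has $s_\infty(c,\pi+\pi') = s_\infty(c,\pi)+s_\infty(c,\pi')$. Faithfulness is immediate since $s_\infty(c,\pi^b)=\mathds{1}_{c\in b}$, so the maximizers are exactly the singletons $\{c\}$ with $c\in b$. For consistency, if $\{c^*\}\in\mathcal{F}_{1,\infty}(\pi)\cap\mathcal{F}_{1,\infty}(\pi')$ then $c^*$ attains the maximal score $M$ in $\pi$ and $M'$ in $\pi'$; by additivity every candidate's combined score is at most $M+M'$, with equality exactly for candidates maximal in both, so $\mathcal{F}_{1,\infty}(\pi+\pi')=\mathcal{F}_{1,\infty}(\pi)\cap\mathcal{F}_{1,\infty}(\pi')$. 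Cancellation holds because when all scores are equal every singleton maximizes, and since $k=1$ the set of all singletons is exactly $\mathcal{O}$.

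The substance is the reverse direction: assuming $\mathcal{F}$ satisfies the three axioms, I would show $\mathcal{F}(\pi)=\{\{c_i\}:s_\infty(c_i,\pi)=\max_j s_\infty(c_j,\pi)\}=\mathcal{F}_{1,\infty}(\pi)$ for every $\pi$. The first tool is a padding lemma: if $\theta$ is any profile all of whose candidates have equal score, then cancellation gives $\mathcal{F}(\theta)=\mathcal{O}$, and consistency applied to $\pi$ and $\theta$ gives $\mathcal{F}(\pi+\theta)=\mathcal{F}(\pi)\cap\mathcal{O}=\mathcal{F}(\pi)$, since $\mathcal{F}(\pi)\neq\emptyset$. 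Thus adding a ``balanced'' profile never changes the outcome. Using faithfulness and consistency one computes the outcome of any single-candidate profile directly, $\mathcal{F}(m\,\pi^{\{c\}})=\{\{c\}\}$, and more generally, by repeatedly peeling off balanced blocks such as $m\sum_i \pi^{\{c_i\}}$ and then equalizing candidates pairwise, the outcome of a profile built only from singletons is seen to be its set of top-scoring candidates.

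The crux is to show that $\mathcal{F}(\pi)$ depends only on the approval-score vector $(s_\infty(c_1,\pi),\dots,s_\infty(c_n,\pi))$, so that every profile can be replaced by a canonical one with the same scores. The strategy I would use is to realize any score-preserving modification as the addition of balanced profiles to both sides: given $\pi$ and a second profile $\pi'$ with identical scores, I would exhibit balanced profiles $\theta_1,\theta_2$ with $\pi+\theta_1=\pi'+\theta_2$ and then conclude $\mathcal{F}(\pi)=\mathcal{F}(\pi')$ from the padding lemma. Convenient balanced building blocks here are the profiles $\pi^{S}+\pi^{\mathcal{C}\setminus S}$, which award every candidate exactly one point. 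Granting score-sufficiency, one reduces to the singleton profiles handled above, and the characterization follows.

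I expect the main obstacle to be exactly this score-sufficiency step. The difficulty is that profiles can only be added, never subtracted, so the naive ``cancel the common part'' reasoning is unavailable; one must instead show that the lattice of balanced profiles is rich enough to absorb every score-zero difference, which is the genuine combinatorial core of the Fishburn and Alós-Ferrer arguments. A secondary subtlety is establishing neutrality among equal-scoring candidates---that all top-scorers genuinely tie---from cancellation alone rather than from an explicit neutrality axiom; this too is ultimately powered by the padding lemma applied to carefully chosen balanced profiles.
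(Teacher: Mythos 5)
Your forward direction is correct, and your padding lemma (cancellation plus consistency implies that adding a balanced profile never changes the outcome) is exactly the engine of the argument the paper builds on (its Lemma \ref{lem:simple_ballot_equivilance}). But the two steps you defer are the entire content of the reverse direction, and one of them, as described, would fail. The missing idea is the ballot-splitting identity: for any ballot $b$ with complement $a = \mathcal{C}\setminus b$, both $\pi^b + \pi^a$ and $\sum_{c\in b}\pi^{\{c\}} + \pi^a$ are balanced, and
$$\left(\pi + \pi^b\right) + \left(\sum_{c\in b}{\pi^{\{c\}}} + \pi^a\right) = \left(\pi + \sum_{c\in b}{\pi^{\{c\}}}\right) + \left(\pi^b + \pi^a\right),$$
so your padding lemma yields $\mathcal{F}\left(\pi + \pi^b\right) = \mathcal{F}\left(\pi + \sum_{c\in b}{\pi^{\{c\}}}\right)$. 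Iterating this over the ballots of $\pi$ is precisely what proves your score-sufficiency claim; the balanced blocks $\pi^S + \pi^{\mathcal{C}\setminus S}$ you name are the right ones, but you stop at naming them and flag the step as an expected obstacle rather than closing it, so the genuine core of the theorem remains unproved in your write-up.

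Second, your final step --- computing $\mathcal{F}$ on a profile built from singletons by ``peeling off balanced blocks and equalizing candidates pairwise'' --- does not work as stated. Consistency is useless for combining distinct singleton profiles, since $\mathcal{F}\left(\pi^{\{c\}}\right) \cap \mathcal{F}\left(\pi^{\{c'\}}\right) = \{\{c\}\}\cap\{\{c'\}\} = \emptyset$ for $c \neq c'$, and cancellation applies only when \emph{every} candidate ties: for $n = 3$ there is no pairwise-equalization route to $\mathcal{F}\left(\pi^{\{c_1\}} + \pi^{\{c_2\}}\right) = \{\{c_1\},\{c_2\}\}$. The fix --- and this is how the paper's proof of Theorem \ref{thm:axiomatic_p_approval_cancellation}, which specializes to this statement at $k=1$, $p=\infty$, proceeds --- is to build the canonical profile from \emph{nested set-ballots} rather than singletons: writing the distinct scores as $t_1 > \cdots > t_z = 0$ and $\kappa_y$ for the candidates with score $t_y$, the profile $\pi^* = \sum_{y=1}^{z-1}{(t_y - t_{y+1})\cdot\pi^{\kappa_1\cup\cdots\cup\kappa_y}}$ has the same scores as $\pi$; faithfulness gives the outcome of each summand, those outcomes are nested (hence all intersections are nonempty), and consistency then gives $\mathcal{F}(\pi^*) = \{\{c\} : c \in \kappa_1\} = \mathcal{F}_{1,\infty}(\pi)$. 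Alternatively you could rescue your singleton route by applying the splitting identity above with $\pi = 0$ in the merging direction, $\mathcal{F}\left(\sum_{c\in S}{\pi^{\{c\}}}\right) = \mathcal{F}\left(\pi^S\right)$, and then invoking faithfulness --- but either way it is that identity, not pairwise equalization, that carries the step.
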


This was proved in \cite{Fishburn1978-pc} but that proof also required an additional assumption called neutrality.\footnote{Neutrality refers to the idea that if the candidates were renamed, the outcome of the election would just be the outcome original outcome with the renaming applied.} Later, \cite{Alos-Ferrer2006-hj} provided a simpler proof without the assumption of neutrality.

\begin{thm}
\label{thm:axiomatic_approval_disjoint_equality}
(\cite{Brandl2022-vz}) Let $\mathcal{F}$ be a voting rule.  $\mathcal{F} = \mathcal{F}_{1,\infty}$ if and only if it satisfies faithfulness, consistency, and disjoint equality.
\end{thm}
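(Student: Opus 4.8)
The plan is to prove the two directions separately, leaning on Theorem~\ref{thm:axiomatic_approval_cancellation}. For the forward direction, suppose $\mathcal{F} = \mathcal{F}_{1,\infty}$. Faithfulness and consistency hold already by the forward implication of Theorem~\ref{thm:axiomatic_approval_cancellation}, so only disjoint equality remains, and this I would verify directly from the definition of $s_\infty$: if $b \cap b' = \emptyset$ then $s_\infty(c, \pi^b + \pi^{b'}) = \mathds{1}_{c \in b} + \mathds{1}_{c \in b'}$ equals $1$ for every $c \in b \cup b'$ and $0$ otherwise, so the $\argmax$ is exactly $\{\{c\} : c \in b \cup b'\}$. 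It is worth noting that this is precisely the axiom that fails for satisfaction approval voting, where the ballot $\{c_1, c_2\}$ contributes only $\tfrac12$ to each of $c_1, c_2$; so disjoint equality is exactly the condition forcing the unnormalized $\mathcal{L}^\infty$ weighting.

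For the converse, suppose $\mathcal{F}$ satisfies the three axioms. Rather than re-deriving the full characterization, I would reduce to Theorem~\ref{thm:axiomatic_approval_cancellation} by showing that faithfulness, consistency, and disjoint equality together imply cancellation; the theorem then yields $\mathcal{F} = \mathcal{F}_{1,\infty}$ at once. Along the way I would record three easy consequences of consistency: (i) homogeneity, $\mathcal{F}(m\pi) = \mathcal{F}(\pi)$ for $m \ge 1$, by applying consistency to $\pi$ with itself; (ii) a neutral-element lemma, that if $\mathcal{F}(\rho) = \mathcal{O}$ then $\mathcal{F}(\pi + \rho) = \mathcal{F}(\pi)$ for every $\pi$, since $\mathcal{F}(\pi) \subseteq \mathcal{O} = \mathcal{F}(\rho)$ makes the intersection nonempty; and (iii) that $\mathcal{F}$ agrees with $\mathcal{F}_{1,\infty}$ on single ballots (by faithfulness) and on all two-ballot profiles $\pi^b + \pi^{b'}$ (by faithfulness and one application of consistency in the overlapping case $b \cap b' \ne \emptyset$, and by disjoint equality in the disjoint case). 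The natural generators of profiles mapping to $\mathcal{O}$ are then the complementary pairs $\pi^b + \pi^{\bar b}$ with $b \sqcup \bar b = \mathcal{C}$ and the ballot $\pi^{\mathcal{C}}$.

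The crux, and the step I expect to be hardest, is cancellation itself: showing that every balanced profile $\pi$ (one with $s_\infty(c, \pi)$ independent of $c$) satisfies $\mathcal{F}(\pi) = \mathcal{O}$. For profiles admitting a split $\pi = \pi_1 + \pi_2$ with $\mathcal{F}_{1,\infty}(\pi_1) \cap \mathcal{F}_{1,\infty}(\pi_2) \ne \emptyset$, a straightforward induction on the number of ballots closes the argument, since consistency (for both $\mathcal{F}$ and $\mathcal{F}_{1,\infty}$) propagates agreement across the split. The genuine difficulty is the \emph{irreducible} balanced profiles admitting no such split, the prototype being $\pi = \pi^{\{c_1\}} + \cdots + \pi^{\{c_n\}}$, the profile of $n$ disjoint singleton ballots, every binary split of which has disjoint winner sets; here the two-ballot disjoint equality axiom does not apply directly, and padding by known-$\mathcal{O}$ profiles only returns one to the same profile. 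Resolving this case is exactly where disjoint equality must substitute for the full strength of cancellation, and I expect it to require using the \emph{equality} (not merely the inclusion) in consistency, together with the global structure consistency imposes on the cone of profiles, following the argument of \cite{Brandl2022-vz} rather than a one-line reduction.
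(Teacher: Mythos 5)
Your forward direction is complete and correct, and your three preliminary observations are all sound: integer homogeneity does follow from consistency, profiles mapping to $\mathcal{O}$ can be absorbed, and $\mathcal{F}$ agrees with $\mathcal{F}_{1,\infty}$ on one- and two-ballot profiles exactly as you argue. The problem is the converse, and it sits exactly where you say it does: balanced profiles admitting no split with overlapping winner sets, the prototype being $\pi = \pi^{\{c_1\}} + \cdots + \pi^{\{c_n\}}$. At that point your proposal does not give an argument; it says the step should follow the argument of \cite{Brandl2022-vz}. But that step is not a residual case to be patched --- it is the entire content of the theorem. Consistency only ever gives information about sums, never differences, so none of your stated tools (homogeneity, absorption, two-ballot agreement, induction over splittable profiles) can touch the irreducible balanced profiles, and your proposed reduction to Theorem~\ref{thm:axiomatic_approval_cancellation} via ``derive cancellation first'' leaves precisely that lemma unproven. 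Deferring the crux to the paper being cited means the proposal, as written, is not a proof.

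For what it is worth, the missing mechanism is the swap construction, which is how \cite{Brandl2022-vz} proceeds and which this paper writes out in full in the proof of Theorem~\ref{thm:axiomatic_p_approval_disjoint_equality}; specializing to $p = \infty$ (where $\theta_\infty^b = \pi^b$) and to integer profiles (where your homogeneity observation replaces positive scaling) gives exactly the proof of the present theorem. Concretely: take $\{c\} \in \mathcal{F}(\pi)$ and $\{c'\} \in \mathcal{F}_{1,\infty}(\pi)$, let $\pi[c'/c]$ be the number of ballots of $\pi$ containing $c'$ but not $c$ (similarly $\pi[c/c']$ and $\pi[\cdot/cc']$), and set $\pi' = \pi[c'/c]\cdot\pi^{\{c\}} + \pi[c/c']\cdot\pi^{\{c'\}} + \pi[\cdot/cc']\cdot\pi^{\{c,c'\}}$. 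Evaluating $\mathcal{F}(\pi+\pi')$ by pairing each ballot of $\pi$ with its swap partner uses only disjoint equality and faithfulness and shows $\{c\},\{c'\} \in \mathcal{F}(\pi+\pi')$; evaluating it by the decomposition in which $\pi$ itself is a summand, and using that consistency is an \emph{equality}, forces $\{c'\} \in \mathcal{F}(\pi)$ (and rules out $\{c\} \notin \mathcal{F}_{1,\infty}(\pi)$). Note this bypasses cancellation entirely --- it proves $\mathcal{F}(\pi) = \mathcal{F}_{1,\infty}(\pi)$ for every $\pi$ directly, and in particular dispatches your prototype: with $\pi = \pi^{\{c_1\}}+\cdots+\pi^{\{c_n\}}$, any $\{x\} \in \mathcal{F}(\pi)$ and any $c' \neq x$ give $\pi[c'/x] = \pi[x/c'] = 1$, and the two evaluations yield $\{c'\} \in \mathcal{F}(\pi)$, hence $\mathcal{F}(\pi) = \mathcal{O}$. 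So the gap is fillable, but only by writing out the swap argument --- at which point the detour through cancellation and Theorem~\ref{thm:axiomatic_approval_cancellation} becomes unnecessary.
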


\subsection{New Axioms}

When $k < 1$ and/or $p < \infty$, we can see that $k,p$-approval voting does not satisfy most of the axioms above. If $k > 1$ then $\mathcal{F}_{k,p}\left(\pi^b\right)$ consists of sets of size $k \neq 1$ and so it fails faithfulness. Additionally, even with $k = 1$, for $p < \infty$, $\mathcal{F}_{k,p}\left(\pi^{\{c_1\}} + \pi^{\{c_2, c_3\}}\right) = \{\{c_1\}\}$. If $n = 3$, this violates cancellation since all the candidates got the same number of votes. For any $n \geq 3$, this violates disjoint equality since 

$$\mathcal{F}_{k,p}\left(\pi^{\{c_1\}} + \pi^{\{c_2, c_3\}}\right) = \{\{c_1\}\} \neq \{\{c_1\}\} \cup \{\{c_2\}, \{c_3\}\} = \mathcal{F}_{k,p}\left(\pi^{\{c_1\}}\right) \cup  \mathcal{F}_{k,p}\left(\pi^{\{c_2, c_3\}}\right).$$ 

As such, we will need to generalize these axioms to characterize $k,p$-approval voting.

\begin{remark}
For the rest of this section, we will expand the definition of profiles so that $\Pi = \R_{\geq 0}^{\mathcal{I}}$, i.e., a profile, $\pi$, is a function from ballots to non-negative real numbers, not just non-negative integers.
\end{remark}

In order to generalize to $k > 1$, we introduce the winners function which takes a ballot as input and returns those committees with maximum intersection.

\begin{definition}
\label{def:winners_function}
We define the \textbf{winners function} $\mathcal{W}_k \colon \mathcal{I} \to \mathcal{O}$ as

$$\mathcal{W}_k(b) = \{K \in \mathcal{O} : |K \cap b| = min(|b|, k)\}.$$

For convenience, we let $\mathcal{W}_k(\emptyset) = \mathcal{O}$.

\end{definition}

In addition, to deal with $p < \infty$, we need to be able to transform profiles into functions that return scores, not votes.

\begin{definition}
For each $b \in \mathcal{I}$, we define the \textbf{unit score ballot profile} as

$$\theta_p^b = |b|^{1/p} \cdot \pi^b$$
\end{definition}

If we take an arbitrary profile, $\pi$, we can write it as

$$\pi = \sum_{b \in \mathcal{I}}{\beta_b \cdot \pi^b} \in span\left\{\pi^b : b \in \mathcal{I}\right\}$$

where the $\beta_b = \pi(b) \geq 0$. Similarly, for the same $\pi$ we can write it as

$$\pi = \sum_{b \in \mathcal{I}}{\gamma_b \cdot \theta_p^b}  \in span\left\{\theta_p^b : b \in \mathcal{I}\right\}$$

where $\gamma_b = \beta_b/|b|^{1/p} \geq 0$. The main difference between these two ways of writing $\pi$ is that, when written as an element of $span\left\{\theta_p^b : b \in \mathcal{I}\right\}$ we can calculate a candidates score as

$$s_p(c, \pi) = \sum_{b \in \mathcal{I}}{\gamma_b \cdot \mathds{1}_{c \in b}}$$

as opposed to how the score is calculated in definition \ref{def:p_aavs}.

Now that we have these definitions, we can generalize the axioms. 

\begin{definition}
\label{def:k_faithfulness}
A voting rule, $\mathcal{F}$, satisfies \textbf{$k$-faithfulness} if when there is only one voter, their vote specifies the winners, i.e., for all $b \in \mathcal{I}$, $\mathcal{F}(\pi^b) = \mathcal{W}_k(b)$.
\end{definition}

\begin{definition}
\label{def:p_cancellation}
A voting rule, $\mathcal{F}$, satisfies $p$-\textbf{cancellation} if every committee wins when every candidate has the same score, i.e., $\mathcal{F}(\pi) = \mathcal{O}$ when for all $c, c' \in \mathcal{C}$, $s_{p}(c, \pi) = s_{p}(c', \pi)$.
\end{definition}

\begin{definition}
\label{def:disjoint_equality}
A voting rule, $\mathcal{F}$, satisfies \textbf{$p$-disjoint equality} if $\mathcal{F}\left(\theta_p^b + \theta_p^{b'}\right) = \mathcal{W}_k(b \cup b')$ when $b \cap b' = \emptyset$.
\end{definition}

\begin{definition}
\label{def:positive_scaling}
A voting rule, $\mathcal{F}$, satisfies \textbf{positive scaling} if for $\ell \in \R_{> 0}$, $\mathcal{F}(\ell \pi) = \mathcal{F}(\pi)$ for all $\pi \in \Pi$.
\end{definition}

We can see that $k$-faithfulness generalizes faithfulness since $\mathcal{W}_1(b) = \{\{c\} : c \in b\}$. In addition, $p$-cancellation and $p$-disjoint equality generalize cancellation and disjoint equality (respectively) since $\theta_{\infty}^b = \pi^b$. Positive scaling, however, is an entirely new axiom. If $\mathcal{F}$ satisfies consistency, positive scaling is unnecessary when $\ell = x/y$ for $x, y \in \Z_{> 0}$ because

$$\mathcal{F}\left(\frac{x}{y}\pi\right)
= \mathcal{F}\left(\sum_{i = 1}^{y}{\frac{x}{y}\pi}\right)
= \mathcal{F}(x \pi) = \bigcap_{i = 1}^{x}{\mathcal{F}(\pi)}
= \mathcal{F}(\pi).$$

However, for $\ell \in \R_{> 0} \setminus \Q_{> 0}$, the above doesn't work since we can't write such an $\ell$ as the ratio of two positive integers. The table below briefly compares the original versus the generalized axioms.

\begin{center}
\begin{tabular}{c|c|c}
    Original Axiom & Generalized Axiom & Purpose of Generalization \\
    \hline \hline
    faithfulness & $k$-faithfulness & handle $k > 1$ \\
    consistency & N/A & N/A \\
    cancellation & $p$-cancellation & handle $p < \infty$ \\
    disjoint equality & $p$-disjoint equality & handle $p < \infty$ \\
    N/A & positive scaling & handle profiles with irrational images
\end{tabular}
\end{center}

In the following sub-sections, we turn to generalizing theorems \ref{thm:axiomatic_approval_cancellation} and \ref{thm:axiomatic_approval_disjoint_equality}.

\subsection{Faithfulness, Consistency, Positive Scaling, and Cancellation}

Here, we generalize theorem \ref{thm:axiomatic_approval_cancellation} in two ways: for $k > 1$ and for all $p \in [1, \infty]$.

\begin{thm}
\label{thm:axiomatic_p_approval_cancellation}
Let $\mathcal{F}$ be a voting rule. $\mathcal{F} = \mathcal{F}_{k,p}$ if and only if it satisfies $k$-faithfulness, consistency, $p$-cancellation, and positive scaling.
\end{thm}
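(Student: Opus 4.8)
The plan is to prove the two directions separately, with essentially all the work in the converse. For the forward direction I would verify that $\mathcal{F}_{k,p}$ satisfies all four axioms directly from the definition of $s_p$. Since $s_p(c, \pi^b) = \mathds{1}_{c \in b}/|b|^{1/p}$, maximizing $\sum_{c \in K} s_p(c, \pi^b)$ over committees of size $k$ is the same as maximizing $|K \cap b|$, whose maximal value is $\min(|b|, k)$; hence $\mathcal{F}_{k,p}(\pi^b) = \mathcal{W}_k(b)$ and $k$-faithfulness holds. Consistency follows because $s_p(c, \cdot)$ is additive in the profile, so any committee maximizing the score-sum for both $\pi$ and $\pi'$ continues to maximize it for $\pi + \pi'$, and conversely. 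For $p$-cancellation, equal scores across candidates make every size-$k$ committee have the same total, so $\mathcal{F}_{k,p}(\pi) = \mathcal{O}$; positive scaling is immediate since scaling $\pi$ by $\ell > 0$ scales every score by $\ell$ and leaves the $\argmax$ unchanged.

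For the converse, suppose $\mathcal{F}$ satisfies the four axioms. The key device is the score-equivalent canonical profile
\[
\sigma_\pi = \sum_{c \in \mathcal{C}} s_p(c, \pi)\, \theta_p^{\{c\}} = \sum_{c \in \mathcal{C}} s_p(c, \pi)\, \pi^{\{c\}},
\]
which is built entirely from single-candidate ballots (using $|\{c\}|^{1/p} = 1$) and which, by the score formula in the $\theta_p^b$ representation, satisfies $s_p(c, \sigma_\pi) = s_p(c, \pi)$ for every $c$. I would then prove two facts: (A) $\mathcal{F}(\pi) = \mathcal{F}(\sigma_\pi)$, i.e. the winners depend only on the score vector; and (B) $\mathcal{F}(\sigma_\pi) = \mathcal{F}_{k,p}(\pi)$, the set of committees consisting of the $k$ top-scoring candidates. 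Together these give $\mathcal{F}(\pi) = \mathcal{F}_{k,p}(\pi)$.

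The main tool for both (A) and (B) is the observation that adding a cancellation profile never changes the outcome: if $\rho$ has all candidate scores equal then $p$-cancellation gives $\mathcal{F}(\rho) = \mathcal{O}$, so $\mathcal{F}(\mu) \cap \mathcal{F}(\rho) = \mathcal{F}(\mu) \neq \emptyset$ and consistency yields $\mathcal{F}(\mu + \rho) = \mathcal{F}(\mu)$. For (A), I would complete both $\pi$ and $\sigma_\pi$ to a common balanced profile by adding $\sum_c (M - s_p(c,\pi))\,\pi^{\{c\}}$, where $M = \max_c s_p(c,\pi)$; since $\pi$ and $\sigma_\pi$ share their scores, the same completion balances both, and cancellation-invariance transfers the equality of outcomes. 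For (B), I would compute $\mathcal{F}(\sigma_\pi)$ by repeatedly subtracting balanced layers of single-candidate ballots (each a cancellation profile, hence invisible to $\mathcal{F}$) to shrink the support while preserving the $\argmax$ set, finishing on the remaining single-candidate ballots with $k$-faithfulness, where $\mathcal{W}_k(\{c\}) = \{K : c \in K\}$. Positive scaling is what lets me pass from rational score vectors (reachable by integer combinations of unit ballots and consistency) to arbitrary real ones.

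The hard part will be making step (B) rigorous when more than $k$ candidates carry positive weight: there the winner sets $\{K : c \in K\}$ of the individual single-candidate ballots have empty common intersection, so consistency cannot be applied naively to combine them. The fix is to route every combination through cancellation profiles — whose outcome is all of $\mathcal{O}$, so that the nonempty-intersection hypothesis of consistency is automatically met — and to peel off balanced layers one at a time, checking at each stage that the set of score-maximizing committees is unchanged. Care is also needed that $p$-cancellation is phrased with $s_p$ rather than $s_\infty$; this is precisely what makes $\sigma_\pi$, which matches the $s_p$-scores of $\pi$, the correct canonical object, and it is the reason the characterization holds uniformly in $p$.
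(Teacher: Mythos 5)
Your forward direction is fine and matches the paper. The converse, however, has two genuine gaps, both stemming from the same conflation: adding a profile that \emph{balances} $\pi$ is not the same as adding a \emph{balanced} (cancellation) profile, and only the latter is outcome-preserving under the axioms. In step (A) you add $\tau = \sum_{c \in \mathcal{C}}{(M - s_p(c,\pi))\,\theta_p^{\{c\}}}$ to both $\pi$ and $\sigma_\pi$. But $\tau$ itself is not a cancellation profile (its candidate scores $M - s_p(c,\pi)$ differ across candidates), so your cancellation-invariance tool does not apply to it; all you learn is that $\mathcal{F}(\pi+\tau) = \mathcal{O} = \mathcal{F}(\sigma_\pi+\tau)$, which carries no information about $\mathcal{F}(\pi)$ versus $\mathcal{F}(\sigma_\pi)$, since the axioms give no way to ``subtract'' $\tau$ afterwards. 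The fix is the two-decomposition trick that drives the paper's Lemma \ref{lem:simple_ballot_equivilance}: add to $\pi$ the profile $\sigma_\pi + \tau$ and add to $\sigma_\pi$ the profile $\pi + \tau$; each of these \emph{is} a cancellation profile (because $\pi$ and $\sigma_\pi$ share scores), and both sums are literally the same profile $\pi + \sigma_\pi + \tau$, so consistency gives $\mathcal{F}(\pi) = \mathcal{F}(\pi + \sigma_\pi + \tau) = \mathcal{F}(\sigma_\pi)$.

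The gap in step (B) is more serious and is not fixable with the moves you allow yourself. Working only with singleton ballots and balanced layers over all of $\mathcal{C}$, you can peel exactly one layer (with coefficient $\min_{c}{s_p(c,\pi)}$); after that some candidate has weight $0$, so no further balanced layer exists, and if more than $k$ candidates still carry positive weight, the sets $\{K : c \in K\}$ supplied by $k$-faithfulness have empty pairwise intersection, so consistency never becomes applicable. Concretely, with $n = 3$, $k = 1$ and scores $(3,2,1)$, after one peel you are left staring at $2\theta_p^{\{c_1\}} + \theta_p^{\{c_2\}}$ with no axiom to apply. What is missing is any use of ballots of size greater than one: the paper evaluates its canonical profile by building it from \emph{nested union ballots}, $\pi^* = \sum_{i}{(t_i - t_{i+1})\,\theta_p^{\kappa_1 \cup \cdots \cup \kappa_i}}$, whose row outcomes $\mathcal{W}_k(\kappa_1 \cup \cdots \cup \kappa_i)$ are nested and hence have nonempty intersections, with a case split at the index where the union's size crosses $k$. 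Equivalently, one needs the merge move $\mathcal{F}\left(\mu + \sum_{c \in b}{\ell\,\theta_p^{\{c\}}}\right) = \mathcal{F}\left(\mu + \ell\,\theta_p^{b}\right)$, proved by adding the cancellation profile built from $b$ together with its complement ballot $\mathcal{C} \setminus b$; in the example this converts $2\theta_p^{\{c_1\}} + \theta_p^{\{c_2\}}$ into $\theta_p^{\{c_1\}} + \theta_p^{\{c_1,c_2\}}$, whose outcome $\{\{c_1\}\}$ then follows from faithfulness, scaling, and consistency. Your phrase ``route every combination through cancellation profiles'' points in the right direction, but the cancellation profiles you actually need contain non-singleton ballots, and your plan never introduces them.
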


First we prove a preliminary lemma.

\begin{lemma}
\label{lem:simple_ballot_equivilance}
Let $\mathcal{F}$ be a voting rule, $\mathcal{\pi} \in \Pi$ a profile, $b \in \mathcal{I}$ a ballot, $\ell > 0$ a positive real number. If $\mathcal{F}$ satisfies consistency, $p$-cancellation, and positive scaling, then
$$\mathcal{F}\left(\pi + \ell \cdot \theta_p^b \right) = \mathcal{F}\left(\pi + \sum_{c \in b}{\ell \cdot \theta_p^{\{c\}}} \right).$$
\end{lemma}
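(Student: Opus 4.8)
The plan is to show that the two profiles appearing on either side have \emph{identical} score functions, and then to argue that a rule satisfying these axioms cannot distinguish profiles with identical scores. Write $\mu = \pi + \ell \cdot \theta_p^b$ and $\nu = \pi + \sum_{c \in b} \ell \cdot \theta_p^{\{c\}}$. First I would verify that $s_p(c', \mu) = s_p(c', \nu)$ for every $c' \in \mathcal{C}$. Since $s_p$ is linear in the profile, it suffices to compare $\ell \cdot \theta_p^b$ with $\sum_{c \in b} \ell \cdot \theta_p^{\{c\}}$; using $\theta_p^b = |b|^{1/p} \cdot \pi^b$ one computes $s_p(c', \ell \cdot \theta_p^b) = \ell \cdot \mathds{1}_{c' \in b} = s_p\left(c', \sum_{c \in b} \ell \cdot \theta_p^{\{c\}}\right)$, so the two agree candidate-by-candidate. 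This is exactly the point of the unit score ballots: $\theta_p^b$ distributes one unit of score to each member of $b$, just as the singletons do.

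Next I would build a single auxiliary profile that balances these common scores. Let $M = \max_{c \in \mathcal{C}} s_p(c, \mu)$ and set $\rho = \sum_{c \in \mathcal{C}} (M - s_p(c, \mu)) \cdot \theta_p^{\{c\}}$. Because each coefficient $M - s_p(c, \mu)$ is non-negative, $\rho$ is a genuine element of the expanded profile space $\R_{\geq 0}^{\mathcal{I}}$, and by linearity $s_p(c, \mu + \rho) = M$ for every $c$. Since $\mu$ and $\nu$ have the same scores, we also get $s_p(c, \nu + \rho) = M$ for every $c$. Thus both $\mu + \rho$ and $\nu + \rho$ have all candidate scores equal, so $p$-cancellation gives $\mathcal{F}(\mu + \rho) = \mathcal{F}(\nu + \rho) = \mathcal{O}$.

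Finally I would extract the conclusion with a commutativity trick. Applying consistency to the pair $\mu$ and $\nu + \rho$ — legitimate because $\mathcal{F}(\mu) \cap \mathcal{F}(\nu + \rho) = \mathcal{F}(\mu) \cap \mathcal{O} = \mathcal{F}(\mu) \neq \emptyset$ — yields $\mathcal{F}(\mu + \nu + \rho) = \mathcal{F}(\mu)$. Symmetrically, applying consistency to $\nu$ and $\mu + \rho$ yields $\mathcal{F}(\nu + \mu + \rho) = \mathcal{F}(\nu)$. Since profile addition is commutative, $\mu + \nu + \rho = \nu + \mu + \rho$, and therefore $\mathcal{F}(\mu) = \mathcal{F}(\nu)$, which is precisely the claim.

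The step I expect to need the most care is the invocation of consistency: I must guarantee the relevant intersections are nonempty, which is exactly why I route everything through the balanced profiles whose image is all of $\mathcal{O}$. I would also check the edge case in which $\mu$ is already balanced (so $\rho$ is the zero profile) and confirm the zero profile behaves as expected under $p$-cancellation. One thing worth flagging is that this argument appears to rely only on consistency and $p$-cancellation; I do not see positive scaling entering, even though it is listed among the hypotheses. I would re-examine whether positive scaling is genuinely required here — perhaps to legitimize the real scalar $\ell$ or the real-coefficient auxiliary profile $\rho$ in the author's formulation — or whether it is carried along only because the surrounding theorem needs it.
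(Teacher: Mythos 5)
Your proof is correct, and it runs on the same engine as the paper's own argument: profiles in which every candidate has equal score are sent to $\mathcal{O}$ by $p$-cancellation, consistency can then be applied because intersecting anything non-empty with $\mathcal{O}$ is automatically non-empty, and commutativity of profile addition closes the loop. The difference is in the balancing construction. The paper never touches $\pi$ at all: writing $a = \mathcal{C} \setminus b$ for the complement ballot, it observes that $\ell \cdot \theta_p^{b} + \ell \cdot \theta_p^{a}$ and $\sum_{c \in b}{\ell \cdot \theta_p^{\{c\}}} + \ell \cdot \theta_p^{a}$ each give every candidate score exactly $\ell$, so both map to $\mathcal{O}$; applying consistency twice to the single common profile $\pi + \ell \cdot \theta_p^{b} + \sum_{c \in b}{\ell \cdot \theta_p^{\{c\}}} + \ell \cdot \theta_p^{a}$ gives the result. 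You instead top up the \emph{full} profiles $\mu$ and $\nu$ to the maximum score $M$ using the singleton-built profile $\rho$, so your common profile $\mu + \nu + \rho$ contains $\pi$ twice. Your route costs a little more bookkeeping (you must verify that $\mu$ and $\nu$ have identical scores, and compute the scores of $\mu$ itself, which the paper's arrangement avoids), but it buys robustness: when $b = \mathcal{C}$ the paper's complement $a$ is empty, which is not an element of $\mathcal{I}$, so the paper's construction implicitly needs the convention $\theta_p^{\emptyset} = 0$ or a separate trivial case, whereas your $\rho$ is always a legitimate element of $\R_{\geq 0}^{\mathcal{I}}$. Finally, your flag about positive scaling is a good catch: the paper's proof nominally invokes it alongside $p$-cancellation, but $p$-cancellation as stated applies directly to any profile whose candidates all have equal scores, regardless of the scalar $\ell$, so positive scaling is not genuinely used in either your argument or the paper's; it is carried in the hypotheses because the surrounding theorem (Theorem \ref{thm:axiomatic_p_approval_cancellation}) needs it elsewhere.
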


\begin{proof}
We denote the complement of $b$ as $a = \mathcal{I} \setminus b$. By $p$-cancellation and positive scaling, 

$$\mathcal{F}\left(\ell \cdot \theta_p^b + \ell \cdot \theta_p^a\right)= \mathcal{O}
= \mathcal{F}\left(\sum_{c \in b}{\ell \cdot \theta_p^{\{c\}}} + \ell \cdot \theta_p^a\right),$$

since, for each profile, all the candidates have a score of $\ell$. Since $\mathcal{O}$ is the set of all possible committees of size $k$, we use consistency to get

\begin{align*}
\mathcal{F}\left(\pi + \ell \cdot \theta_p^{b}\right)
& = \mathcal{F}\left( \pi + \ell \cdot \theta_p^{b} \right) \cap \mathcal{O} \\
& = \mathcal{F}\left(\pi + \ell \cdot \theta_p^{b} + \sum_{c \in b}{\ell \cdot \theta_p^{\{c\}}} + \ell \cdot \theta_p^{a} \right) \\
& = \mathcal{F}\left(\pi + \sum_{c \in b}{\ell \cdot \theta_p^{\{c\}}} + \ell \cdot \theta_p^{b} + \ell \cdot \theta_p^{a} \right) \\
& = \mathcal{F}\left(\pi + \sum_{c \in b}{\ell \cdot \theta_p^{\{c\}}}\right) \cap \mathcal{O} \\
& = \mathcal{F}\left(\pi + \sum_{c \in b}{\ell \cdot \theta_p^{\{c\}}}\right).
\end{align*}
\end{proof}

\begin{corollary}
\label{cor:simple_profile}
The outcome of $\mathcal{F}(\pi)$ only depends on the scores $s_p(c, \pi)$ for $c \in \mathcal{C}$.
\end{corollary}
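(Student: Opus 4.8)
The plan is to show that every profile can be transformed, without changing the value of $\mathcal{F}$, into a profile supported only on singleton ballots, and that this singleton profile is completely determined by the score vector $(s_p(c, \pi))_{c \in \mathcal{C}}$. Since the corollary inherits the hypotheses of Lemma \ref{lem:simple_ballot_equivilance}, namely that $\mathcal{F}$ satisfies consistency, $p$-cancellation, and positive scaling, that lemma is available throughout. The idea is that the lemma lets us ``break apart'' any multi-candidate ballot into singletons in a way that preserves both the outcome and the scores, so repeated application collapses $\pi$ to a canonical singleton-supported representative.

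First I would write $\pi$ in the unit score ballot decomposition as $\pi = \sum_{b \in \mathcal{I}}{\gamma_b \cdot \theta_p^b}$ with each $\gamma_b \geq 0$, exactly as described after the definition of the unit score ballot profile. Next, I eliminate the multi-candidate ballots one at a time: for any $b$ with $|b| > 1$ and $\gamma_b > 0$, split off its term as $\pi = \pi' + \gamma_b \cdot \theta_p^b$, where $\pi' \in \Pi$ collects the remaining (nonnegative) terms, and apply Lemma \ref{lem:simple_ballot_equivilance} with $\ell = \gamma_b > 0$ to obtain
$$\mathcal{F}(\pi) = \mathcal{F}\left(\pi' + \gamma_b \cdot \theta_p^b\right) = \mathcal{F}\left(\pi' + \sum_{c \in b}{\gamma_b \cdot \theta_p^{\{c\}}}\right).$$
Restricting to ballots with $\gamma_b > 0$ both respects the $\ell > 0$ hypothesis of the lemma and ensures each application strictly decreases the number of multi-candidate ballots carrying positive weight. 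Hence after finitely many steps we reach a profile $\tilde{\pi}$ supported only on singletons with $\mathcal{F}(\pi) = \mathcal{F}(\tilde{\pi})$.

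Finally I would identify $\tilde{\pi}$. Writing $\tilde{\pi} = \sum_{c \in \mathcal{C}}{\alpha_c \cdot \theta_p^{\{c\}}}$, each substitution $\gamma_b \cdot \theta_p^b \mapsto \sum_{c \in b}{\gamma_b \cdot \theta_p^{\{c\}}}$ is score-preserving, since both $\ell \cdot \theta_p^b$ and $\sum_{c \in b}{\ell \cdot \theta_p^{\{c\}}}$ add exactly $\ell$ to the score of each $c \in b$ and nothing to any other candidate. Thus $\tilde{\pi}$ has the same scores as $\pi$, and because $s_p(c, \tilde{\pi}) = \alpha_c$, this forces $\alpha_c = s_p(c, \pi)$, giving
$$\mathcal{F}(\pi) = \mathcal{F}\left(\sum_{c \in \mathcal{C}}{s_p(c, \pi) \cdot \theta_p^{\{c\}}}\right).$$
The right-hand side depends on $\pi$ only through the numbers $s_p(c, \pi)$, so two profiles with identical scores share the same image under $\mathcal{F}$, which is the claim. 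The only delicate point is bookkeeping: I must check that each intermediate expression is a genuine profile in $\Pi = \R_{\geq 0}^{\mathcal{I}}$ (which holds because we only ever add nonnegative multiples of the $\theta_p^{\{c\}}$) and that the singleton coefficients accumulate to the scores. Invoking score preservation, rather than a separate induction tracking coefficients, is the cleanest way to pin down $\alpha_c$ and avoid that obstacle.
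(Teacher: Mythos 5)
Your proof is correct and follows essentially the same route as the paper's: iteratively applying Lemma \ref{lem:simple_ballot_equivilance} to collapse multi-candidate ballots into singletons, yielding a canonical profile whose singleton coefficients are exactly the scores $s_p(c, \pi)$. Your version is more explicit about the bookkeeping (the $\ell > 0$ hypothesis and score preservation), but the underlying argument is the same.
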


\begin{proof}
Given a profile, $\pi$, we can create a new profile, $\pi'$, by starting with $\pi' \equiv 0$ and iteratively applying Lemma \ref{lem:simple_ballot_equivilance} to get $\pi'$ such that $\pi'(\{c\}) = s_p(c, \pi)$ and $\pi'(b) = 0$ for $|b| > 1$. In doing this, $\mathcal{F}(\pi) = \mathcal{F}(\pi')$.
\end{proof}

We now proceed to prove theorem \ref{thm:axiomatic_p_approval_cancellation}.

\begin{proof}
(Theorem \ref{thm:axiomatic_p_approval_cancellation}) ($\Rightarrow$) First, we show that $\mathcal{F}_{k,p}$ satisfies the axioms. $\mathcal{F}_{k,p}$ satisfies $k$-faithfulness since $s_p(c, \pi^b) = \mathds{1}_{c \in b}/|b|^{1/p}$ and so $\mathcal{F}_{k,p}(\pi^b) = \mathcal{W}_k(b)$.

For consistency, by basic properties of $\argmax$, if $\mathcal{F}_{k, p}(\pi) \cap \mathcal{F}_{k, p}(\pi') \neq \emptyset$, we get that

\begin{align*}
\mathcal{F}_{k, p}(\pi + \pi')
& = \argmax_{K \in\mathcal{O}}{\left\{ \sum_{c \in K}{s_p(c, \pi + \pi')} \right\}} \\
& = \argmax_{K \in\mathcal{O}}{\left\{ \sum_{c \in K}{s_p(c, \pi)} \right\}} \cap \argmax_{K \in\mathcal{O}}{\left\{ \sum_{c \in K}{s_p(c, \pi')} \right\}} \\
& = \mathcal{F}_{k, p}(\pi) \cap \mathcal{F}_{k, p}(\pi').
\end{align*}

For $p$-cancellation, if $s_p(c, \pi) = s_p(c', \pi)$ for all $c, c' \in \mathcal{C}$, we get that $\sum_{c \in K}{s_p(c, \pi)} = \sum_{c \in K'}{s_p(c', \pi)}$ for all $K, K' \in \mathcal{O}$ meaning that 

$$\mathcal{F}_{p}(\pi) = \argmax_{K \in\mathcal{O}}{\left\{ \sum_{c \in K}{s_p(c, \pi)} \right\}} = \mathcal{O}.$$

Finally, for positive scaling, let $\ell > 0$. So, $s_p(c, \ell \cdot \pi) = \ell \cdot s_p(c, \pi)$ and so we get

\begin{align*}
\mathcal{F}_{k, p}(\ell \cdot \pi)
& = \argmax_{K \in\mathcal{O}}{\left\{ \sum_{c \in K}{s_p(c, \ell \cdot \pi)} \right\}} \\
& = \argmax_{K \in\mathcal{O}}{\left\{ \sum_{c \in K}{\ell \cdot s_p(c, \pi)} \right\}} \\
& = \argmax_{K \in\mathcal{O}}{\left\{ \sum_{c \in K}{s_p(c, \pi)} \right\}} \\
& = \mathcal{F}_{k,p}(\pi).
\end{align*}

($\Leftarrow$) Now, assume $\mathcal{F}$ satisfies the axioms. Given a profile $\pi$, we construct a new profile, $\pi^*$.  To do this, we let $t_1 > t_2 > \cdots > t_z = 0$ be the sorted sequence of scores in $s_p(\mathcal{O}, \pi)$, i.e., all the scores actually obtained by some committee. (Note, we include $t_z = 0$ even if no committee has a score of zero). We also let $\kappa_i = \{c \in \mathcal{C} : s_p(c, \pi) = t_i\}$. So, we construct $\pi^*$ as

\begin{align*}
\pi^* = & (t_1 - t_2) \cdot \theta_p^{\kappa_1} \\
& + (t_2 - t_3) \cdot (\theta_p^{\kappa_1} \cup \theta_p^{\kappa_2}) \\
& \cdots \\
& + (t_{z - 1} - t_z) \cdot \left(\theta_p^{\kappa_1} \cup \cdots \cup \theta_p^{\kappa_{z - 1}}\right).
\end{align*}

First, we note that the scores in $\pi$ and $\pi^*$ are the same so, by corollary \ref{cor:simple_profile}, they have the same outcomes. Also, by $k$-faithfulness and positive scaling, the outcome of each row is

$$\mathcal{F}((t_{y} - t_{y + 1}) \cdot (\theta_p^{\kappa_1} \cup \cdots \cup \theta_p^{\kappa_{y}})) = \mathcal{W}_k(\kappa_1 \cup \cdots \cup \kappa_y).$$

Now, we note that for ballots $b_1, b_2 \in \mathcal{I}$, if $b_1 \subset b_2$ and $|b_2| \leq k$ then $\mathcal{W}(b_1) \supset \mathcal{W}(b_2)$ but if $b_1 \supset b_2$ and $|b_2| \geq k$ then $\mathcal{W}(b_1) \subset \mathcal{W}(b_2)$. 

So, there are three possibilities. If $| \kappa_1| > k$ then, by consistency, $\mathcal{F}(\pi) = \mathcal{F}(\pi^*) = \mathcal{W}_k(\kappa_1) = \mathcal{F}_{k,p}(\pi)$ as desired.  Similarly, if $|\kappa_1 \cup \cdots \cup \kappa_{z-1}| < k$, then $\mathcal{F}(\pi) = \mathcal{F}(\pi^*) = \mathcal{W}_k(\kappa_1 \cup \cdots \cup \kappa_{z-1}) = \mathcal{F}_{k,p}(\pi)$ as desired. Otherwise, there is a $y$ such that $|\kappa_1 \cup \cdots \cup \kappa_y| \leq k$ and $|\kappa_1 \cup \cdots \cup \kappa_y \cup \kappa_{y + 1}| > k$. We split the above profile into the two profiles

$$L = \sum_{i = 1}^{y}{(t_i - t_{i + 1}) \cdot (\theta_p^{\kappa_1} \cup \cdots \cup \theta_p^{\kappa_{i}}))}$$

and

$$R = \sum_{i = y + 1}^{z - 1}{(t_i - t_{i + 1}) \cdot (\theta_p^{\kappa_1} \cup \cdots \cup \theta_p^{\kappa_{i}}))}.$$

By consistency, we know that 

$$\mathcal{F}(L) = \mathcal{W}_k(\kappa_1 \cup \cdots \cup \kappa_y) = \left\{K \in \mathcal{O} : \kappa_1 \cup \cdots \cup \kappa_y \subset K \right\}$$

and

$$\mathcal{F}(R) = \mathcal{W}_k(\kappa_1 \cup \cdots \cup \kappa_y \cup \kappa_{y + 1}) = \left\{K \in \mathcal{O} : K \subset \kappa_1 \cup \cdots \cup \kappa_y \cup \kappa_{y + 1} \right\}.$$

So, we get that

\begin{align*}
\mathcal{F}(L) \cap \mathcal{F}(R) 
& = \left\{ K \in \mathcal{O} : \kappa_1 \cup \cdots \cup \kappa_y \subset K \subset \kappa_1 \cup \cdots \cup \kappa_y \cup \kappa_{y + 1} \right\} \\
& = \{\kappa_1 \cup \cdots \cup \kappa_y \cup S : S \subset \kappa_{y + 1} : |S| = k - |\kappa_1 \cup \cdots \cup \kappa_y|\} \\
& = \mathcal{F}_{k,p}(\pi),
\end{align*}

where the last line is true because: (a) the intersection is non-empty; and (b) this is all the committees with the candidates who ``clearly'' win and then any selection of the next tier candidates needed to get to $k$ candidates. In all cases, we get $\mathcal{F}(\pi) = \mathcal{F}_{k,p}(\pi)$ as desired. 
\end{proof}

This proof mirrors that in \cite{Alos-Ferrer2006-hj}. Lemma \ref{lem:simple_ballot_equivilance} and Corollary \ref{cor:simple_profile} mirror step 1 and step 2 in \cite{Alos-Ferrer2006-hj}. The construction of $\pi^*$ and the analysis of $\mathcal{F}(\pi^*)$ mirrors step 3 in \cite{Alos-Ferrer2006-hj}. The key changes are the introduction of $\mathcal{W}_k$ to allow for committees and $\theta_p^b$ to allow ballots of unit score instead of unit vote.

\subsection{Faithfulness, Consistency, Positive Scaling, and Disjoint Equality}

Here, we generalize Theorem \ref{thm:axiomatic_approval_disjoint_equality}. Unfortunately, unlike with Theorem \ref{thm:axiomatic_p_approval_cancellation}, although we are able to generalize to $p < \infty$, we are unable to generalize to $k > 1$.

\begin{thm}
\label{thm:axiomatic_p_approval_disjoint_equality}
Let $\mathcal{F}$ be a voting rule.  $\mathcal{F} = \mathcal{F}_{1,p}$ if and only if it satisfies faithfulness, consistency, $p$-disjoint equality, and positive scaling.
\end{thm}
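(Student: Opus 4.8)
The plan is to dispatch the easy direction by a direct score computation and to reduce the hard direction to the already-known $p=\infty$ characterisation (Theorem~\ref{thm:axiomatic_approval_disjoint_equality}) by a change of coordinates built from the unit-score ballots $\theta_p^b$.

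For ($\Rightarrow$) I would verify the four axioms for $\mathcal{F}_{1,p}$. Consistency and positive scaling go through exactly as in the proof of Theorem~\ref{thm:axiomatic_p_approval_cancellation} (they use only properties of $\argmax$ and the homogeneity $s_p(c,\ell\pi)=\ell\,s_p(c,\pi)$), and faithfulness is the $k=1$ instance of $k$-faithfulness checked there. The one genuinely new computation is $p$-disjoint equality: since $s_p(c,\theta_p^b)=|b|^{1/p}s_p(c,\pi^b)=\mathds{1}_{c\in b}$, for disjoint $b,b'$ we get $s_p(c,\theta_p^b+\theta_p^{b'})=\mathds{1}_{c\in b\cup b'}$, whose maximisers are precisely the singletons in $b\cup b'$, i.e.\ $\mathcal{W}_1(b\cup b')$.

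For ($\Leftarrow$), the key observation is that the $\theta_p^b$ are engineered so that scores in the $p$-world coincide with approval scores in the $\infty$-world. I would introduce the linear map $\phi$ on (real) profiles determined by $\phi(\theta_p^b)=\pi^b$, equivalently the diagonal rescaling $\pi(b)\mapsto|b|^{-1/p}\pi(b)$; it is a bijection preserving the nonnegative cone and satisfies $s_\infty(c,\phi(\pi))=s_p(c,\pi)$ for every $c$. Setting $\mathcal{G}:=\mathcal{F}\circ\phi^{-1}$, I would check that $\mathcal{F}$ satisfies faithfulness, consistency, $p$-disjoint equality and positive scaling if and only if $\mathcal{G}$ satisfies faithfulness, consistency, ordinary disjoint equality and positive scaling: consistency and positive scaling transport because $\phi^{-1}$ is linear and homogeneous; faithfulness transports because $\mathcal{G}(\pi^b)=\mathcal{F}(\theta_p^b)=\mathcal{F}(\pi^b)$ by positive scaling; and $p$-disjoint equality becomes disjoint equality because $\mathcal{G}(\pi^b+\pi^{b'})=\mathcal{F}(\theta_p^b+\theta_p^{b'})$. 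Since $\mathcal{F}_{1,p}=\mathcal{F}_{1,\infty}\circ\phi$ (both select the maximal-score singletons and the scores agree under $\phi$), we get $\mathcal{F}=\mathcal{F}_{1,p}$ iff $\mathcal{G}=\mathcal{F}_{1,\infty}$, so the whole statement collapses to the $p=\infty$, $k=1$ characterisation.

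The main obstacle is that this reduction lands in Theorem~\ref{thm:axiomatic_approval_disjoint_equality} but on the \emph{enlarged} domain $\Pi=\R_{\geq0}^{\mathcal{I}}$, whereas the prior result is stated for integer profiles. This is exactly where positive scaling earns its place in the hypotheses: it is the ingredient that upgrades the integer-profile characterisation to all nonnegative-real profiles. Concretely I would re-run the Brandl et al.\ argument over $\R_{\geq0}$, invoking positive scaling wherever the integer proof duplicated a profile in order to scale it; the delicate combinatorial heart—coaxing the ``all candidates tie'' behaviour out of disjoint equality together with consistency (the analogue of the cancellation step in Theorem~\ref{thm:axiomatic_p_approval_cancellation}, which here is \emph{not} handed to us)—is unchanged and is the crux. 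Finally I would record that this entire route is tied to $k=1$: the transported axiom is single-winner disjoint equality, and both the reduction and Theorem~\ref{thm:axiomatic_approval_disjoint_equality} lean on committees being singletons (e.g.\ $\mathcal{W}_1(\mathcal{C})=\mathcal{O}$), which is precisely why the argument does not extend to $k>1$.
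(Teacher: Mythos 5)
Your forward direction matches the paper's exactly, so the interest is all in ($\Leftarrow$), where your change-of-coordinates reduction is sound but organized genuinely differently from the paper. The map $\phi$ given by $\pi(b)\mapsto |b|^{-1/p}\pi(b)$ is indeed a linear bijection of $\R_{\geq 0}^{\mathcal{I}}$ onto itself with $\phi(\theta_p^b)=\pi^b$ and $s_\infty(c,\phi(\pi))=s_p(c,\pi)$, and the four axioms for $\mathcal{F}$ transport to the four $\infty$-axioms for $\mathcal{G}=\mathcal{F}\circ\phi^{-1}$ exactly as you claim (with positive scaling doing the work for faithfulness). The paper never introduces such a map: instead it re-proves the characterization of \cite{Brandl2022-vz} directly with unit-score ballots $\theta_p^b$ in place of $\pi^b$ --- fixing $\{c\}\in\mathcal{F}(\pi)$ and $\{c'\}\in\mathcal{F}_{1,p}(\pi)$, forming the auxiliary profile $\pi'=\pi[c'/c]\,\theta_p^{\{c\}}+\pi[c/c']\,\theta_p^{\{c'\}}+\pi[\cdot/cc']\,\theta_p^{\{c,c'\}}$, writing $\pi+\pi'$ in two different ways, and intersecting the per-term outcomes to force $\mathcal{F}(\pi)=\mathcal{F}_{1,p}(\pi)$. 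What your reduction buys is modularity: it isolates, once and for all, the fact that $p<\infty$ contributes nothing beyond a rescaling of coordinates, so all genuinely new content lives in the $p=\infty$, real-profile case. What it costs --- and you correctly flag this yourself --- is that Theorem \ref{thm:axiomatic_approval_disjoint_equality} cannot simply be cited: it is a statement about integer profiles without positive scaling in its axiom set, so your reduction lands in a statement that is itself unproven. Your proposed fix, re-running the argument of \cite{Brandl2022-vz} over $\R_{\geq 0}$ with positive scaling supplying the scaling steps, is correct and does go through; but be aware that this ``re-run'' is not a routine patch --- it is the entire mathematical content of the theorem, and it is precisely what the paper's written proof consists of (executed in the $p$-coordinates rather than after applying $\phi$). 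As written, then, your proposal is a valid reduction plus a promissory note for the crux; once you write out that Brandl-style argument, your proof and the paper's coincide up to the change of coordinates, so the honest summary is: same combinatorial core, cleaner and more systematic packaging on your end, but incomplete until the deferred step is executed.
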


\begin{proof}
($\Rightarrow$) We showed that $\mathcal{F}_{1,p}$ satisfies faithfulness, consistency, and positive scaling the proof of theorem \ref{thm:axiomatic_p_approval_cancellation}. It remains to show that $\mathcal{F}_{1,p}$ satisfies $p$-disjoint equality. Let $b, b' \in \mathcal{I}$ such that $b \cap b' = \emptyset$. So, for $c \in \mathcal{C}$, we get that

$$s_p(c, \theta_p^b + \theta_p^{b'}) = \mathds{1}_{c \in b} + \mathds{1}_{c \in b'}.$$

Since $b \cap b' = \emptyset$ at most one of the indicator functions above can return a $1$ and so we get

$$s_p(c, \theta_p^b + \theta_p^{b'}) = \mathds{1}_{c \in b \cup b'}.$$

Thus, we maximize the total score of the members of a committee by selecting as many candidates as possible from $b \cup b'$, i.e.,

$$\mathcal{F}_{1, p}(\theta_p^b + \theta_p^{b'}) = \argmax_{K \in\mathcal{O}}{\left\{ \sum_{c \in K}{s_p(c, \theta_p^b + \theta_p^{b'})} \right\}} = \mathcal{W}_k(b \cup b').$$

($\Leftarrow$) Let $\mathcal{F}$ be a voting rule that satisfies the axioms and $\pi \in \Pi$ an arbitrary profile where

$$\pi = \sum_{b \in \mathcal{I}}{\gamma_b \cdot \theta_p^b}$$

with $\{c\} \in \mathcal{F}(\pi)$ and $\{c'\} \in \mathcal{F}_{1,p}(\pi)$. We'll show that $\{c'\} \in \mathcal{F}(\pi)$ and $\{c\} \in \mathcal{F}_{1,p}(\pi)$. To do this, we first let

$$\pi[c'/c] = \sum_{b \in \mathcal{I}, c \notin b, c' \in b}{\gamma_b},
\bigspace{}\bigspace{}
\pi[c/c'] = \sum_{b \in \mathcal{I}, c' \notin b, c \in b}{\gamma_b}, \text{ and}
\bigspace{}\bigspace{}
\pi[\cdot/cc'] = \sum_{b \in \mathcal{I}, c', c \notin b}{\gamma_b}.$$

We create a new score profile, $\pi'$ via

$$\pi' =
\pi[c'/c] \cdot \theta_p^{\{c\}} + 
\pi[c/c'] \cdot \theta_p^{\{c'\}} + 
\pi[\cdot/cc'] \cdot \theta_p^{\{c, c'\}}.$$

So we can write $\pi + \pi'$ in two different ways. In the first way:

\begin{align*}
\pi + \pi' = &
\sum_{\substack{b \in \mathcal{I} \\ c \notin b, c' \in b}}{\pi(b)(\theta_p^b + \theta_p^{\{c\}})} +
 \sum_{\substack{b \in \mathcal{I} \\ c' \notin b, c \in b}}{\pi(b)(\theta_p^b + \theta_p^{\{c'\}})} + \\
& \sum_{\substack{b \in \mathcal{I} \\ c,c' \notin b}}{\pi(b)(\theta_p^b + \theta_p^{\{c, c'\}})} + 
\sum_{\substack{b \in \mathcal{I} \\ c,c' \in b}}{\pi(b)(\theta_p^b)}.
\end{align*}

We can see that $\{c\}, \{c'\} \in \mathcal{F}(\pi + \pi')$ by seeing they are returned by $\mathcal{F}$ for each term via: \begin{itemize}
    \item for each term in the first three summations, $\{c\}$ and $\{c'\}$ are returned due to positive scaling, $p$-disjoint equality, and faithfulness; and
    \item for each term in the last summation, $\{c\}$ and $\{c'\}$ are returned due to positive scaling, cancellation, and faithfulness.
\end{itemize}

Alternatively, we can write $\pi + \pi'$ as:

$$\pi + \pi' 
= \pi 
+ \pi[c/c'] \left(\theta_p^{\{c\}} + \theta_p^{\{c'\}}\right)
+ \left(\pi[c'/c] - \pi[c/c']\right)\left(\theta_p^{\{c\}}\right)
+ \pi[\cdot/cc']\left(\theta_p^{\{c, c'\}}\right).$$

Note that since $\{c'\} \in \mathcal{F}_{1,p}(\pi)$, $\pi[c'/c] \geq \pi[c/c']$. First, we show that $\{c\}$ is returned for each term by: \begin{itemize}
    \item $\{c\} \in \mathcal{F}(\pi)$ by assumption;
    \item $\{c\} \in \mathcal{F}\left(\pi[c/c'] \left(\theta_p^{\{c\}} + \theta_p^{\{c'\}}\right)\right)$ by disjoint equality;
    \item $\{c\} \in \mathcal{F}\left(\left(\pi[c'/c] - \pi[c/c']\right)\left(\theta_p^{\{c\}}\right)\right)$ by faithfulness; and
    \item $\{c\} \in \mathcal{F}\left(\pi[\cdot/cc']\left(\theta_p^{\{c, c'\}}\right)\right)$ by faithfulness.
\end{itemize}

So, since $\{c\}$ is returned by $\mathcal{F}$ for each term, we know that $\mathcal{F}(\pi + \pi')$ is the intersection of what $\mathcal{F}$ returns for each term and that $\{c\} \in \mathcal{F}(\pi + \pi')$. Given this, if $\{c'\} \notin \mathcal{F}(\pi)$ then $\{c'\} \notin \mathcal{F}(\pi + \pi')$ which contradicts what we proved above. Conversely, if $\{c\} \notin \mathcal{F}_{1,p}(\pi)$, then $\pi[c'/c] - \pi[c/c'] > 0$ which means that $c' \notin \mathcal{F}(\pi + \pi')$ which also contradicts what we proved above. Thus, $\mathcal{F}(\pi) = \mathcal{F}_{1,p}(\pi)$.
\end{proof}

The proof above is almost the same as that in \cite{Brandl2022-vz} unit score ballots instead of (regular) ballots.

\section{Utility Maximization and Maximum Likelihood Estimation Justifications}

In this section, we look at justifications of $k,p$-approval voting in two different, but related, ways. The first is relates to utility maximization, i.e., investigating when a voting system maximizes the total utility of the voters. The second relates to likelihood maximization. Here, we assume there is a ``correct'' answer (i.e., ``correct'' committee) and that the votes are ``noisy'' signals that can be aggregated to help find the correct committee. A voting system is a maximum likelihood estimator if it finds the ``best answer'' to a given noise model.

\subsection{Preliminaries}
Before we discuss this, we present some definitions specific to this topic.

\begin{definition}
\label{def:utlity}
A \textbf{utility function} is a function $U:\mathcal{I} \times \mathcal{O} \to \R$. The interpretation is once we know a voter's ballot, we know their utility for each possible outcome. A voting rule, $\mathcal{F}$, is \textbf{utility maximizing} (UM) with respect to $U$ if, for all $\pi \in \Pi$,

$$\mathcal{F}(\pi) = \argmax_{K \in \mathcal{O}}{\left\{\sum_{b \in \mathcal{I}}{\pi(b) \cdot U(b, K)}\right\}}.$$
\end{definition}

\begin{definition}
\label{def:mle}
A \textbf{noise model} is a function $P:\mathcal{O} \times \mathcal{I} \to \R_{\geq 0}$ such that for any fixed $K \in \mathcal{O}$, 

$$\sum_{b \in \mathcal{I}}{P(K, b)} = 1.$$

That is, when we hold $K \in \mathcal{O}$ constant, $P$ forms a probability distribution over $\mathcal{I}$. A voting rule is a \textbf{maximum likelihood estimator} (MLE) with respect to $P$ if, for all $\pi \in \Pi$,

$$\mathcal{F}(\pi) = \argmax_{K \in \mathcal{O}}{\left\{ 
\prod_{b \in \mathcal{I}}{P(K, b)^{\pi(b)}} \right\}}.$$
\end{definition}

A voting rule as UM is a simple concept. The idea of a voting rule as an MLE is explained in detail in \cite{Conitzer2005CommonVR}. Briefly, the idea is that there is a ``correct'' outcome, $K^*$, and each voter is trying to ``guess'' the correct outcome in a noisy environment. In this context, a voting rule is a method for taking the noisy guesses and trying to determine the correct outcome.

\begin{example}
\label{ex:plurality_um_mle}
We look back at plurality voting as in example \ref{ex:plurality_voting}. If it is the case that each voter gets a utility of $1$ if their chosen candidate wins and $0$ otherwise, then plurality voting returns the candidate that maximizes the total utility.

Similarly, assume if the ``correct'' candidate is some $c^*$ then each voter has some $p_1$ chance of voting for that candidate and $p_2 < p_1$ chance of voting for any other candidate. In this case, plurality voting returns the maximum likelihood estimator for the correct candidate.
\end{example}

We can link the two concepts via the following theorems.

\begin{thm}
\label{thm:um_mle}
(\cite{Pivato2013-we}) A voting rule is UM for some $U$ if and only if it is MLE for some $P$.
\end{thm}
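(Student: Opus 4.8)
The bridge between the two notions is the logarithm. For a fixed profile $\pi$, the MLE objective $\prod_{b \in \mathcal{I}} P(K,b)^{\pi(b)}$ and the UM objective $\sum_{b \in \mathcal{I}} \pi(b) U(b,K)$ become literally equal once we take logs and set $U(b,K) = \log P(K,b)$, since $\log \prod_{b} P(K,b)^{\pi(b)} = \sum_{b} \pi(b) \log P(K,b)$. Because $\log$ is strictly increasing, it preserves $\argmax$ as a \emph{set} of maximizers, ties included. This already settles the direction MLE $\Rightarrow$ UM: given a noise model $P$, the function $U(b,K) = \log P(K,b)$ is a utility function whose UM rule coincides with the MLE rule of $P$ (adopting the conventions $0^0 = 1$ and $0\cdot(-\infty)=0$ for vanishing probabilities, or simply restricting to full-support noise models so that $U$ is real-valued).

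For UM $\Rightarrow$ MLE I would try to invert this correspondence: given $U$, produce a noise model $P$ whose $\log P(K,b)$ induces the same UM rule as $U$. The first step is to record the invariances of the UM rule, namely that replacing $U(b,K)$ by $\alpha U(b,K)$ for any $\alpha > 0$, or by $U(b,K) + d(b)$ for any function $d$ of the ballot alone, leaves the rule unchanged: the rescaling factors out of every outcome's score, and the shift merely adds the $K$-independent constant $\sum_b \pi(b) d(b)$ to all of them. The natural candidate is then the Gibbs form $P(K,b) \sim e^{U(b,K)}$, and the only thing standing between it and a valid noise model is the per-outcome normalizer $Z(K) = \sum_{b} e^{U(b,K)}$: normalizing by $Z(K)$ amounts to subtracting $\log Z(K)$ from $U(b,K)$, which is a function of $K$, and such a shift \emph{does} change the UM rule (its effect scales with the number of voters $\sum_b \pi(b)$). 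So the crux is to arrange that the normalizer be independent of $K$ before normalizing.

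Using the two invariances, the plan is therefore to seek $\alpha > 0$ and weights $w_b = e^{d(b)} > 0$ so that $\sum_{b} w_b\, e^{\alpha U(b,K)}$ is a constant $c$ independent of $K$, and then set $P(K,b) = w_b\, e^{\alpha U(b,K)}/c$. This is a linear feasibility problem in the $w_b$ (constancy across $\mathcal{O}$ gives $|\mathcal{O}|-1$ linear equations in $|\mathcal{I}|$ unknowns) together with the positivity constraints $w_b > 0$, and I expect the existence of a strictly positive solution to be the main obstacle. I would attack it either by a Farkas/LP-duality argument applied to the strictly positive matrix $\bigl(e^{\alpha U(b,K)}\bigr)_{K,b}$ --- asking when the constant vector lies in the positive cone spanned by its columns --- or by a high-temperature perturbation, letting $\alpha \to 0^+$, where the matrix degenerates to the rank-one all-ones matrix and the row sums become automatically constant. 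A useful sanity check on exactly where this can fail: if some outcome $K_1$ satisfies $U(b,K_1) > U(b,K_2)$ for every ballot $b$, then the UM rule picks $K_1$ on every nonzero profile, yet no noise model can reproduce this, since $\sum_b P(K_1,b) = \sum_b P(K_2,b) = 1$ forbids $P(K_1,b) > P(K_2,b)$ for all $b$. I therefore anticipate the positivity step to go through precisely under a nondegeneracy (``balance'') condition on $U$ that excludes such dominated outcomes, and this is the step carrying the real content of the theorem.
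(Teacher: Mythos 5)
Your MLE $\Rightarrow$ UM direction is correct, and it is essentially all the paper itself supplies: the paper does not prove this theorem but defers to Pivato, recording only the transformation $U(b,K) \sim \ln P(K,b)$, which is precisely your log step. The gap is in UM $\Rightarrow$ MLE. You reduce it to an existence claim --- find $\alpha > 0$ and weights $w_b > 0$ so that $\sum_{b \in \mathcal{I}} w_b\, e^{\alpha U(b,K)}$ is independent of $K$ --- and then stop, hoping a Farkas argument or an $\alpha \to 0^+$ perturbation will finish it. It will not: that existence claim is false in general, so the direction you left open cannot be closed as the statement is written.

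Indeed, your ``sanity check,'' once tightened, refutes the theorem under this paper's own definitions. Take $U(b,K_1)=1$ and $U(b,K)=0$ for every other outcome $K$ and every ballot $b$ (you want $K_1$ to dominate \emph{all} outcomes, not just one $K_2$; with only $K_2$ dominated, the UM rule is merely forbidden from choosing $K_2$, and a third outcome could rescue the MLE side at the critical ballot). The UM rule for this $U$ returns $\{K_1\}$ on every nonzero profile. But for any noise model $P$ and any $K \neq K_1$, the constraints $\sum_b P(K_1,b) = \sum_b P(K,b) = 1$ force some ballot $b^*$ with $P(K,b^*) \geq P(K_1,b^*)$; on the one-voter profile $\pi^{b^*}$ the MLE objective reduces to $P(\cdot\,,b^*)$, so $K_1$ is not its unique maximizer there and the MLE rule disagrees with the UM rule. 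In your Gibbs scheme the same failure is visible concretely: constancy of the normalizer degenerates to $(e^{\alpha}-1)\sum_{b} w_b = 0$, which has no solution with $\alpha > 0$ and $w_b > 0$. So the honest conclusion is the one you gesture at but do not draw: UM $\Rightarrow$ MLE holds only under a balance-type hypothesis on $U$ that excludes dominated outcomes (Pivato's actual result is stated for such ``balanced'' scoring rules, a hypothesis the paper's paraphrase drops), and no completion of your plan --- or any other plan --- can prove the equivalence unconditionally. Your diagnosis of where the real content lies is exactly right; what is missing is the recognition that the obstruction you found is fatal to the unconditional statement, not merely a side condition to be verified later.
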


The proof is in \cite{Pivato2013-we} but the key transformation between $U$ and $P$ is

$$U(b, K) \sim \ln{P(K, b)}.$$

This allows us to go from maximizing a sum to maximizing a product and vice versa.

\subsection{k,p-Approval Voting as UM and MLE}

We now look at the specific forms of $U$ and $P$ for which $k,p$-approval voting is UM and MLE.

\begin{thm}
\label{thm:norm_approval_voting_general_u}
$\mathcal{F}_{k, p}$ is UM for

$$U(b, K) = U_0 + \frac{\alpha |b \cap K| + \beta |b \setminus K|}{| b |^{1/p}}$$

with $U_0 \in \R$ and $\alpha > \beta$.
\end{thm}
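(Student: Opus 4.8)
The plan is to show that $\mathcal{F}_{k,p}$, as defined via the score-maximization rule in Definition \ref{def:p_aavs}, agrees with the utility-maximizing rule for the stated $U$. Since both rules are of the form $\argmax_{K \in \mathcal{O}}$ of some objective, it suffices to show that the two objectives differ by something that does not affect which committees are maximal --- ideally that one objective is an affine (or at least order-preserving) transformation of the other. So first I would write out the total utility of a committee $K$ under the profile $\pi$, namely $\sum_{b \in \mathcal{I}} \pi(b) \cdot U(b, K)$, and algebraically simplify it.

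The key simplification is to handle the $U_0$ term and the $|b \setminus K|$ term. The constant $U_0$ contributes $U_0 \sum_{b} \pi(b)$, which is the same for every $K \in \mathcal{O}$ and hence drops out of the $\argmax$. For the bracketed term, I would use that $|b \cap K| + |b \setminus K| = |b|$ for any ballot $b$ (since $K$ partitions $b$ into the part inside $K$ and the part outside), so $|b \setminus K| = |b| - |b \cap K|$. Substituting gives
\begin{align*}
\frac{\alpha |b \cap K| + \beta |b \setminus K|}{|b|^{1/p}}
&= \frac{\alpha |b \cap K| + \beta(|b| - |b \cap K|)}{|b|^{1/p}} \\
&= (\alpha - \beta)\frac{|b \cap K|}{|b|^{1/p}} + \beta \frac{|b|}{|b|^{1/p}}.
\end{align*}
Here the second term $\beta |b|^{1 - 1/p}$ depends only on $b$, not on $K$, so when summed against $\pi(b)$ it again contributes a $K$-independent constant that drops out of the $\argmax$.

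What remains is $(\alpha - \beta) \sum_{b} \pi(b) \frac{|b \cap K|}{|b|^{1/p}}$. The crucial observation is that $|b \cap K| = \sum_{c \in K} \mathds{1}_{c \in b}$, so interchanging the order of summation recovers exactly the $k,p$-approval objective:
\begin{align*}
\sum_{b \in \mathcal{I}} \pi(b) \frac{|b \cap K|}{|b|^{1/p}}
&= \sum_{c \in K} \sum_{b \in \mathcal{I}} \frac{\pi(b) \cdot \mathds{1}_{c \in b}}{|b|^{1/p}}
= \sum_{c \in K} s_p(c, \pi).
\end{align*}
Thus the total utility equals $(\alpha - \beta) \sum_{c \in K} s_p(c, \pi)$ plus a $K$-independent constant. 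The only remaining point is that the $\argmax$ is preserved under this transformation, which is exactly where the hypothesis $\alpha > \beta$ enters: since $\alpha - \beta > 0$, multiplying the objective by this positive constant and adding a constant leaves the set of maximizers unchanged, so the utility-maximizing committees coincide with $\mathcal{F}_{k,p}(\pi) = \argmax_K \sum_{c \in K} s_p(c,\pi)$.

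I do not expect a serious obstacle here; the argument is essentially a bookkeeping computation. The one place to be careful is the $p = \infty$ case, where $|b|^{1/p} = 1$ and I should check the simplification degenerates correctly (it does, giving $s_\infty$ directly), and to state explicitly that positivity of $\alpha - \beta$ is what rules out the degenerate/reversed $\argmax$ --- if $\alpha = \beta$ the $K$-dependence vanishes entirely and every committee ties, while $\alpha < \beta$ would flip to an $\argmin$. So the main ``content'' is simply recognizing the partition identity $|b| = |b \cap K| + |b \setminus K|$ and the summation interchange, with $\alpha > \beta$ guaranteeing the correct direction of optimization.
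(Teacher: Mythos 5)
Your proof is correct and takes essentially the same approach as the paper: both reduce the total utility to a $K$-independent constant plus a positive multiple ($\alpha-\beta$) of $\sum_{c \in K} s_p(c,\pi)$, so the utility-maximizing committees coincide with $\mathcal{F}_{k,p}(\pi)$. The only cosmetic difference is that you eliminate $|b \setminus K|$ via the partition identity $|b| = |b \cap K| + |b \setminus K|$, while the paper keeps both terms and splits the score sum over $c \in K$ and $c \notin K$; the bookkeeping is otherwise identical.
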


\begin{proof}
Let $m = \sum_{b \in \mathcal{I}}{\pi(b)}$. For any $K \in \mathcal{O}$,
\begin{align*}
\sum_{b \in \mathcal{I}}{\pi(b) \cdot U(b, K)}
& = \sum_{b \in \mathcal{I}}{U_0 + \frac{\alpha |b \cap K| + \beta |b \setminus K|}{| b |^{1/p}}} \\
& = mU_0 + \alpha \sum_{c \in K}{\sum_{b \in \mathcal{I}}{\frac{\pi(b)}{| b |^{1/p}} \cdot \mathds{1}_{c \in b}}} + \beta \sum_{c \notin K}{\sum_{b \in \mathcal{I}}{\frac{\pi(b)}{| b |^{1/p}} \cdot \mathds{1}_{c \in b}}} \\
& =  mU_0 + \alpha \sum_{c \in K}{s(c)} + \beta \sum_{c \notin K}{s(c)}.
\end{align*}

Since $\alpha > \beta$ this is maximized by having $K$ be the candidates with the highest scores.
\end{proof}

For satisfaction approval voting, or $k,1$-approval voting in general, when $\beta = 0$, this utility function is used in \cite{Brams2010-cr} to justify satisfaction approval voting. 

One might wonder if we can generalize this further by adding terms for the candidates we did not consider, i.e., $|K \setminus b|$ and $|\mathcal{C} \setminus (K \cup b)|$. To try this, for some fixed ballot $b$, we let $w = |b \cap K|$ and $\ell = |b \setminus K|$. Writing out a (seemingly) more general form gets us

\begin{align*}
U(b, K) & = U_0 + \frac{\alpha w + \beta \ell + \gamma (k - w) + \delta (m - w - \ell - (k - w))}{| b |^{1/p}} \\
& = U_0 + \frac{\gamma k + \delta (m - k)}{| b |^{1/p}} + \frac{(\alpha - \gamma) w + (\beta - \delta) \ell)}{| b |^{1/p}}.
\end{align*}

Since $k$ and $m$ are constants, this is of the same form as the equation of above with

$$U'_0 = U_0 + \frac{\gamma k + \delta (m - k)}{| b |^{1/p}}, 
\bigspace{}\bigspace{}
\alpha' = \alpha - \gamma, \text{ and } 
\bigspace{}\bigspace{}
\beta' = \beta - \delta.$$

As such, nothing is gained by adding in these additional terms. That is to say, we do not gain any additional mathematical expressive power. However, adding these terms could be useful when trying to determine utility functions in a practical context.

Theorems \ref{thm:um_mle} and \ref{thm:norm_approval_voting_general_u} also gives us the following corollary.

\begin{corollary}
\label{cor:norm_approval_voting_mle}
$\mathcal{F}_{k, p}$ is MLE for

$$P(K, b) \sim \alpha^{| b \cap K | / | b |^{1/p}} \cdot \beta^{| b \setminus K |/| b |^{1/p}}$$

when $\alpha > \beta$.
\end{corollary}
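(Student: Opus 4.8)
The plan is to combine Theorem~\ref{thm:um_mle} with Theorem~\ref{thm:norm_approval_voting_general_u}. Theorem~\ref{thm:um_mle} tells us that any voting rule which is UM for some utility function $U$ is automatically MLE for an associated noise model $P$, and the key dictionary between the two is the relation $U(b, K) \sim \ln P(K, b)$ given immediately after that theorem. Since Theorem~\ref{thm:norm_approval_voting_general_u} already establishes that $\mathcal{F}_{k,p}$ is UM for
$$U(b, K) = U_0 + \frac{\alpha |b \cap K| + \beta |b \setminus K|}{|b|^{1/p}},$$
the entire task reduces to pushing this particular $U$ through the exponential map and checking that the resulting $P$ has the stated proportional form.

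First I would invoke Theorem~\ref{thm:um_mle} to assert that $\mathcal{F}_{k,p}$, being UM, is MLE for the $P$ determined by $\ln P(K, b) \sim U(b, K)$, i.e. $P(K, b) \sim \exp(U(b,K))$ up to the multiplicative constant absorbed by the proportionality. Then I would exponentiate the explicit $U$ from Theorem~\ref{thm:norm_approval_voting_general_u}. The additive constant $U_0$ becomes an overall multiplicative factor $e^{U_0}$ which is independent of both $K$ and $b$ in the relevant sense and so can be dropped under $\sim$; the remaining exponent splits as a sum, so the exponential factors as a product
$$\exp\!\left(\frac{\alpha |b \cap K|}{|b|^{1/p}}\right) \cdot \exp\!\left(\frac{\beta |b \setminus K|}{|b|^{1/p}}\right).$$
Rewriting $e^{\alpha t} = (e^{\alpha})^{t}$ and relabelling $e^{\alpha}, e^{\beta}$ as the constants $\alpha, \beta$ (a harmless reuse of symbols, as the statement evidently intends) yields exactly
$$P(K, b) \sim \alpha^{|b \cap K|/|b|^{1/p}} \cdot \beta^{|b \setminus K|/|b|^{1/p}}.$$

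The only subtlety — and what I'd flag as the main point requiring care rather than a genuine obstacle — is the bookkeeping on the constants and the condition $\alpha > \beta$. In Theorem~\ref{thm:norm_approval_voting_general_u} the hypothesis is $\alpha > \beta$ on the \emph{utility} coefficients; after exponentiation the corollary's $\alpha, \beta$ are the images $e^{\alpha}, e^{\beta}$, and since $\exp$ is strictly increasing, $\alpha > \beta$ in the utility sense is equivalent to $e^{\alpha} > e^{\beta}$, so the condition $\alpha > \beta$ in the corollary's (renamed) constants is exactly the right transcription. I would therefore state clearly that the $\alpha, \beta$ in the corollary denote $e^{\alpha}, e^{\beta}$ of the theorem, note that this relabelling preserves the strict inequality, and conclude. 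No convergence or normalization concern arises for the MLE direction because Theorem~\ref{thm:um_mle} already guarantees a valid $P$ exists; we are merely computing its form, and the proportionality symbol $\sim$ lets us ignore the normalizing constant that makes $P(K, \cdot)$ sum to one over $\mathcal{I}$.
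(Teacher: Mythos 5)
Your proposal is correct and follows exactly the route the paper intends: the paper derives the corollary by combining Theorem~\ref{thm:um_mle} (via the dictionary $U(b,K) \sim \ln P(K,b)$) with the explicit utility function of Theorem~\ref{thm:norm_approval_voting_general_u}, which is precisely your argument with the exponentiation and constant-relabelling bookkeeping written out. Your care about the reuse of the symbols $\alpha, \beta$ for $e^{\alpha}, e^{\beta}$ and the preservation of the strict inequality is a welcome clarification of a detail the paper leaves implicit.
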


For $p = \infty$ the formula in corollary \ref{cor:norm_approval_voting_mle} has an interesting interpretation. $P(K, b)$ reduces to

$$\P(K, b) \sim \alpha^{| b \cap K |} \cdot \beta^{| b \setminus K |}.$$

Given a ``correct'' committee $K$ we posit a voter who looks at the candidates one at a time and either approves of each of them or not independently. If she approves of a candidate $c \in K$ with probability $p_1$ and a candidate $c \in K^c = \mathcal{C} \setminus K$ with probability $p_2$ then we get

\begin{align*}
P(K, b) & = \left( \prod_{c \in K \cap b}{p_1} \right) \cdot \left( \prod_{c \in K \setminus b}{1 - p_1} \right) \cdot \left( \prod_{c \in K^c \cap b}{p_2} \right) \cdot \left( \prod_{c \in K^c \setminus b}{1 - p_2} \right)\\
& = p_1^{| K \cap b |} \cdot (1 - p_1)^{| K \setminus b |} \cdot p_2^{| K^c \cap b |} \cdot (1 - p_2)^{| K^c \setminus b |} \\
& = p_1^w \cdot (1 - p_1)^{(k - w)} \cdot p_2^{\ell} \cdot (1 - p_2)^{(n - k - \ell)} \\
& = (1 - p_1)^k \cdot \left(\frac{p_1}{1 - p_1}\right)^w \cdot (1 - p_2)^{(n - k)} \cdot \left( \frac{p_2}{1 - p_2} \right)^{\ell}.
\end{align*}

Because $(1 - p_1)^k$ and $(1 - p_2)^{(n - k)}$ are constants we get

$$P(K, b) \sim \left(\frac{p_1}{1 - p_1}\right)^w \left( \frac{p_2}{1 - p_2} \right)^{\ell}.$$

When $p_1 > p_2$, this is in the same form as the expression in corollary \ref{cor:norm_approval_voting_mle}. Thus, at least for approval voting, this has a reasonable interpretation in terms of a voter selecting candidates one-by-one. For other norms, however, no such interpretation seems readily available.

\section{Distance Rationalizability Justifications}

Finally, we turn to distance rationalizability. Distance rationalizability is a concept used to justify various voting rules including approval voting \cite{Elkind2009-ot}. In general, a voting rule is distance rationalizable if there is a class of ``consensus elections'' (where the voters all agree on the correct outcome) and the voting rule is equivalent to finding the nearest consensus election and returning the consensus of that election.

\subsection{Preliminaries}

Until now, we've considered profiles which can be thought of as counting the number of voters who submitted each ballot. However, an election can also be thought of as a ``list'' showing which ballot each voter submitted. Based on this, we define an election as follows.

\begin{definition}
An \textbf{election} is a finite \textit{ordered} sequence of ballots, i.e., something of the form

$$E = \langle v_1, ..., v_m \rangle$$

where each $v_i \in \mathcal{I}$.
\end{definition}

\begin{remark}
\label{rem:fixed_m}
Throughout this section, we will assume all elections have some fixed number, $m$, of votes.
\end{remark}

Note that every election induces a profile such that $\pi(b) = |\{i \in \{1, ..., m\} : v_i = b\}|$.\footnote{This also means that the considerations in the previous section could be written in terms of elections. If one wished to do so, one would sum over the votes in the election instead of the ballots in $\mathcal{I}$.} As such, it makes sense to apply $\mathcal{F}$ to elections. Thus, we will, with reckless abandon, abuse notation and write $\mathcal{F}(E)$. Before jumping into the details, we give an example.

\begin{example}
\label{ex:plurality_dist_rat}
We return to plurality voting in example \ref{ex:plurality_voting}. We let $\mathcal{S}$ be the set of ``unanimous'' elections, i.e., the $n$ elections, $E_c$, where each voter votes for candidate $c$. Given two elections, $E = \langle v_1, ..., v_m \rangle$ and $E' = \langle v'_1, ..., v'_m \rangle$, let the distance between them be

$$d(E, E') = |\{i : v_i \neq v'_i\}|.$$

With this metric, we can see that for a given $c \in \mathcal{C}$,

$$d(E, E_c) = |\{i : v_i \neq c\}| = m - s(c).$$

So the closet unanimous elections are the $\{E_c : c \in \mathcal{F}(E)\}$. In other words, $\mathcal{F}$ returns exactly those candidates whose unanimous election is closest. As we will see with the definitions below, this makes plurality voting distance rationalizable.
\end{example}

We now lay out some definition from \cite{Elkind2009-ot} that we will need for our discussion.

\begin{definition}
\label{def:unanimous_elections}
For $K \in \mathcal{O}$ we define

$$\mathcal{U}_K = \left\{ E : K \subset \bigcap_{i = 1}^{m}{v_i} \right\}$$

and

$$\mathcal{U} = \bigcup_{K \in \mathcal{O}}{\mathcal{U}_K}.$$

The set $\mathcal{U}$ is the set of \textbf{unanimous elections}.
\end{definition}

There are other consensus classes in the literature with respect to approval ballots (\cite{Elkind2009-ot}) as well as relevant consensus classes and for other types of ballots (\cite{Elkind2009-ot}, \cite{Nitzan1981-ni}). Here, however, we focus on $\mathcal{U}$ as defined in definition \ref{def:unanimous_elections}.

\begin{definition}
\label{def:cand_dist}
Given a metric, $d$, on the space of elections, the \textbf{$(\mathcal{U}, d)$-score} of a committee $K \in \mathcal{O}$ is the distance from $E$ to the nearest unanimous election where $K$ is a winner, i.e.,

$$d_{\mathcal{U}}(K; E) = \min_{E' \in \mathcal{U}_{K}}{d(E, E')}.$$

The set of \textbf{$(\mathcal{U}, d)$-winners} are those committees with minimum scores.
\end{definition}

There are only a finite number of elections with $m$ votes so the minimum always exists.

\begin{definition}
\label{def:dist_rat}
Let $d$ be a metric between elections. A voting rule, $\mathcal{F}$, is \textbf{$(\mathcal{U}, d)$-distance rationalizable} if for all elections, the winners $\mathcal{F}$ returns is exactly the set of $(\mathcal{U}, d)$-winners.
\end{definition}

At least in cases where a voting rule always returns the unanimous decision for unanimous elections, i.e., where $\mathcal{F}\left( \mathcal{U}_K \right) = K$, one could use the Jaccard distance (\cite{Levandowsky1971-dz}) on the outcomes, i.e.,

$$d(E, E') = 1 - \frac{|\mathcal{F}(E) \cap \mathcal{F}(E')|}{|\mathcal{F}(E) \cup \mathcal{F}(E')|}.$$

Under this metric, almost any voting rule can be made distance rationalizable with respect to the unanimity class. However, this somewhat trivializes the concept of distance rationalizability since it directly measures how different the outcomes, not the elections, are. More interesting is when a we have a metric, $d$, on $\mathcal{I}$ and we use this to induce a metric, $\hat{d}$, on elections via

$$\hat{d}(E, E') = \sum_{i = 1}^{m}{d(v_i, v'_i)}.$$

\begin{definition}
\label{def:vote_dist_rat}
A voting rule is \textbf{vote distance rationalizable} if there is a metric, $d$, on $\mathcal{I}$ such that the rule is distance rationalizable with respect to the induced metric $\hat{d}$.
\end{definition}

If such a metric can be found, one can look at how ``reasonable'' or ``natural'' it is as a way of justifying a voting rule. In the remainder of this section, we look into the existence of such metrics.

\subsection{Distance Rationalizability  of Approval Voting}

It was shown in \cite{Elkind2009-ot} that, with $k = 1$, approval voting (i.e. $1,\infty$-approval voting) is vote distance rationalizable. Here, we generalize that to $k,\infty$-approval voting for any $k \geq 1$.

\begin{thm}
\label{thm:approval_voting_dist_rat}
For any $k \geq 1$, $k,\infty$-approval voting is vote distance rationalizable with respect to $\mathcal{U}$ with the metric $d(v, v') = |v \triangle v'|$ where $\triangle$ represents the symmetric difference between two sets.
\end{thm}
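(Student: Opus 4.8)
The plan is to unwind the two relevant definitions and reduce everything to a comparison of scores. I want to show that the $(\mathcal{U}, \hat d)$-score of a committee $K$ equals, up to an additive constant that does not depend on $K$, the negative of its total $s_\infty$-score. Minimizing the former over $K \in \mathcal{O}$ is then the same as maximizing $\sum_{c \in K} s_\infty(c, E)$, which is exactly the definition of $\mathcal{F}_{k,\infty}(E)$. Here $\hat d(E, E') = \sum_{i=1}^m |v_i \triangle v_i'|$ is the metric on elections induced by $d(v, v') = |v \triangle v'|$, and $\hat d_{\mathcal{U}}(K; E) = \min_{E' \in \mathcal{U}_K} \hat d(E, E')$ is the score from Definition \ref{def:cand_dist}.

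First I would compute $\hat d_{\mathcal{U}}(K; E)$ by exploiting separability. The defining constraint of $\mathcal{U}_K$, namely $K \subseteq \bigcap_{i=1}^m v_i'$, is equivalent to the per-vote constraints $v_i' \supseteq K$, which are independent across $i$; since the objective $\sum_{i=1}^m |v_i \triangle v_i'|$ is also a sum of per-vote terms, the minimization decomposes as
$$\hat d_{\mathcal{U}}(K; E) = \sum_{i=1}^m \min_{v_i' \supseteq K}{|v_i \triangle v_i'|}.$$
Next I would solve each inner problem. For a fixed ballot $v_i$, any admissible $v_i'$ contains $K$, so $v_i' \setminus v_i \supseteq K \setminus v_i$ and hence $|v_i \triangle v_i'| = |v_i \setminus v_i'| + |v_i' \setminus v_i| \geq |K \setminus v_i|$; the choice $v_i' = v_i \cup K$ attains this lower bound and is a legal nonempty ballot because $k \geq 1$ forces $K \neq \emptyset$. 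Thus each inner minimum equals $|K \setminus v_i| = k - |K \cap v_i|$.

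Summing over the $m$ votes and using $\sum_{i=1}^m |K \cap v_i| = \sum_{i=1}^m \sum_{c \in K} \mathds{1}_{c \in v_i} = \sum_{c \in K} s_\infty(c, E)$ then yields
$$\hat d_{\mathcal{U}}(K; E) = mk - \sum_{c \in K}{s_\infty(c, E)}.$$
Since $mk$ is a constant independent of the choice of $K \in \mathcal{O}$, the set of $(\mathcal{U}, \hat d)$-winners $\argmin_{K \in \mathcal{O}} \hat d_{\mathcal{U}}(K; E)$ coincides with $\argmax_{K \in \mathcal{O}} \sum_{c \in K} s_\infty(c, E) = \mathcal{F}_{k,\infty}(E)$, which is what distance rationalizability with respect to $\mathcal{U}$ and $\hat d$ requires.

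I do not expect a serious obstacle; the argument is a short computation once the definitions are in place. The one step deserving care is the separation of the minimization: I must justify that optimizing each $v_i'$ independently still produces an element of $\mathcal{U}_K$ (it does, since the unanimity constraint is itself per-vote) and that $v_i \cup K$ is genuinely the per-vote optimum (via the symmetric-difference lower bound above). The nonemptiness requirement $v_i' \neq \emptyset$ is harmless precisely because $k \geq 1$ guarantees $K$, and hence $v_i \cup K$, is nonempty — which is also the reason this clean argument does not obviously extend to a version that would further generalize the norm away from $p = \infty$.
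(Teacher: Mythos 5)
Your proof is correct and takes essentially the same approach as the paper: both identify $v'_i = v_i \cup K$ as the closest admissible modification of each vote, derive $\hat{d}_{\mathcal{U}}(K;E) = mk - \sum_{c \in K}{s_\infty(c,E)}$, and conclude that minimizing this over $K \in \mathcal{O}$ is the same as maximizing the total score, i.e., agrees with $\mathcal{F}_{k,\infty}(E)$. Your per-vote separability and explicit lower bound $|v_i \triangle v'_i| \geq |K \setminus v_i|$ merely make rigorous the optimality step the paper asserts directly (via its $K$ versus $K^C$ split of the symmetric difference), so the two arguments are the same computation organized slightly differently.
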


\begin{proof}
For $E' = \langle v'_1, ..., v'_m \rangle \in \mathcal{U}_K$,
\begin{align*}
\hat{d}(E, E') & = \sum_{i = 1}^{m}{d(v_i, v'_i)} \\
& = \sum_{i = 1}^{m}{|v_i \triangle v'_i|} \\
& = \left(\sum_{i = 1}^{m}{\left|\left(v_i \cap K\right) \triangle \left(v'_i \cap K\right)\right|}\right) + \left(\sum_{i = 1}^{m}{\left|\left(v_i \cap K^C\right) \triangle \left(v'_i \cap K^C\right)\right|}\right) \\
& = \left(\sum_{c \in K}{\sum_{i = 1}^{m}{\left(1 - \mathds{1}_{c \in v_i}\right)}}\right) + \left(\sum_{i = 1}^{m}{\left|\left(v_i \cap K^C\right) \triangle \left(v'_i \cap K^C\right)\right|}\right) \\
& = \left(\sum_{c \in K}{m - s_\infty(c)}\right) + \left(\sum_{i = 1}^{m}{\left|\left(v_i \cap K^C\right) \triangle \left(v'_i \cap K^C\right)\right|}\right) \\
& = mk - \sum_{c \in K}{s_\infty(c)}  + \left(\sum_{i = 1}^{m}{\left|\left(v_i \cap K^C\right) \triangle \left(v'_i \cap K^C\right)\right|}\right).
\end{align*}

For fixed $K$, this is minimized by picking the consensus election $E' = \langle v'_1, ..., v'_m \rangle  \in \mathcal{U}_K$ such that for all $i \in \{1, ..., m\}$, $v'_i = K \cup (v_i \setminus K)$. Doing this, we get

$$\hat{d}(E, E') = mk - \sum_{c \in K}{s_{\infty}(c)}.$$

This is minimized by picking $K$ with the highest scores, i.e., $K \in \mathcal{F}_{k, \infty}(E)$ as desired.
\end{proof}

\subsection{Distance Rationalizability When k = 1}

We now turn to general $1,p$-approval voting for all $p \in [1, \infty]$. To do this, we construct a graph, $G(\mathcal{I}, Z)$, where the nodes are the ballots and the edges are

$$Z = \{\{b, b'\} \in \mathcal{I}^2 : |b \triangle b'| = 1\}.$$

We also have a cost function on the edges,

$$c(\{b, b \cup \{c\}\}) = \frac{1}{|b|^{1/p}}.$$

With this, the distance, $d$, between two ballots is the sum of the costs on the minimum cost path between the two ballots. We can think of the graph as $n$ layers, each corresponding to votes of the same cardinality and layers corresponding to larger cardinalities as ``above'' those with lower cardinalities. We can see an example with $n = 4$ in Figure \ref{fig:layer_graph}.

\begin{figure}[htbp]
    \centering
    \includegraphics[width=\textwidth]{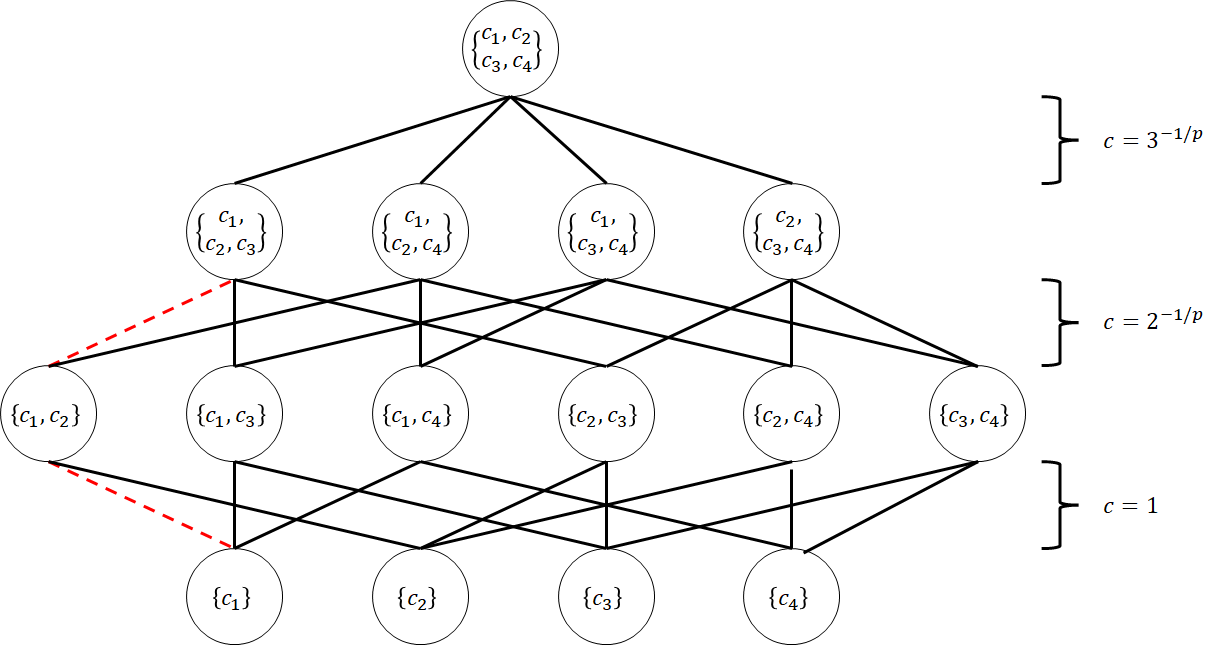}
    \caption{a ``layer'' graph with $n = 4$. Following the path indicated by the two red dashed lines, we can see $d(\{c_1\}, \{c_1, c_2, c_3\}) = 1 + 2^{-1/p}$.}
    \label{fig:layer_graph}
\end{figure}

\begin{thm}
\label{thm:dist_rat_k_1}
For $p \in [1, \infty]$, any $1,p$-approval voting rule is vote distance rationalizable with respect to $\mathcal{U}$ and $d$ as defined above.
\end{thm}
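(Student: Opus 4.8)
The plan is to reduce the $(\mathcal{U}, d)$-score of each singleton committee $\{c\}$ to the score $s_p(c, E)$ and then observe that minimizing the former over $c$ coincides with maximizing the latter. Since $k = 1$, a committee is a singleton $\{c\}$, and $\mathcal{U}_{\{c\}}$ is the set of elections in which every voter approves of $c$. Because $\hat{d}(E, E') = \sum_{i=1}^m d(v_i, v'_i)$ is a sum over voters and the constraint ``$c \in v'_i$ for all $i$'' is separable across voters, the minimization defining $d_{\mathcal{U}}(\{c\}; E)$ splits term by term, giving
$$d_{\mathcal{U}}(\{c\}; E) = \sum_{i=1}^m \min_{v' \in \mathcal{I},\, c \in v'} d(v_i, v').$$
First I would handle this per-voter minimization.

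The crux is the following lemma: for a ballot $v$ with $c \notin v$, the nearest ballot containing $c$ is at distance $1/|v|^{1/p}$, i.e. $\min_{v' \in \mathcal{I},\, c \in v'} d(v, v') = 1/|v|^{1/p}$, while the minimum is $0$ when $c \in v$. The upper bound is immediate, since the single edge $\{v, v \cup \{c\}\}$ has cost $1/|v|^{1/p}$, so adding $c$ directly realizes a path of that cost. The lower bound is where the real work lies: a priori a path could first \emph{grow} the ballot (making later edges cheaper, since $1/\ell^{1/p}$ decreases in $\ell$) or shuffle other candidates before crossing over to a ballot containing $c$, and one must rule out that such a detour beats direct addition.

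To prove the lower bound cleanly, I would exhibit a potential that is $1$-Lipschitz with respect to the edge costs: define $\phi(b) = 1/|b|^{1/p}$ if $c \notin b$ and $\phi(b) = 0$ if $c \in b$. Checking the three kinds of edges — both endpoints containing $c$, neither containing $c$, and exactly one containing $c$ (necessarily a crossing edge $\{b, b \cup \{c\}\}$) — one verifies in each case that $|\phi(b) - \phi(b')| \le c(\{b, b'\})$; the verification is elementary, and the only tight case is the crossing edge, where $|\phi(b) - \phi(b \cup \{c\})| = 1/|b|^{1/p}$ equals the edge cost exactly. Summing along any path then yields $d(v, v') \ge |\phi(v) - \phi(v')| = 1/|v|^{1/p}$ for every $v'$ containing $c$, matching the upper bound. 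This lower bound is the main obstacle, and the Lipschitz potential is precisely the device that converts the intuition ``direct addition is cheapest'' into a rigorous argument, neatly bypassing the need to reason about arbitrary detours.

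Finally, I would assemble the pieces. Substituting the lemma into the decomposition gives
$$d_{\mathcal{U}}(\{c\}; E) = \sum_{i :\, c \notin v_i} \frac{1}{|v_i|^{1/p}} = \left(\sum_{i=1}^m \frac{1}{|v_i|^{1/p}}\right) - s_p(c, E),$$
since the voters with $c \in v_i$ contribute exactly $s_p(c, E)$. The bracketed sum is independent of $c$, so $\argmin_c d_{\mathcal{U}}(\{c\}; E) = \argmax_c s_p(c, E) = \mathcal{F}_{1,p}(E)$; that is, the $(\mathcal{U}, d)$-winners are exactly the $1,p$-approval winners, which is vote distance rationalizability. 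I would also record the boundary case $p = \infty$, where every edge cost is $1$, the metric $d$ reduces to the symmetric-difference distance $|v \triangle v'|$, and the argument recovers the $k = 1$ instance of Theorem \ref{thm:approval_voting_dist_rat}.
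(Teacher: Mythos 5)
Your proposal is correct, and its overall skeleton matches the paper's proof: both reduce the $(\mathcal{U},d)$-score of $\{c\}$ to a per-voter nearest-ballot problem, establish that the nearest ballot containing $c$ to a ballot $v$ lies at distance $\mathds{1}_{c \notin v}/|v|^{1/p}$, and conclude from $d_{\mathcal{U}}(\{c\};E) = \sum_{i=1}^m |v_i|^{-1/p} - s_p(c,E)$ that minimizing distance is the same as maximizing score. The genuine difference is how the crux lower bound is certified. The paper argues informally about paths in the layer graph: a first step down a layer costs $1/(|v|-1)^{1/p} > 1/|v|^{1/p}$, and going up more than one layer ``includes a cost of $1/|v|^{1/p}$ and then more''; this is the right intuition, but as stated it does not pin down arbitrary detours that wander up and down the layers repeatedly before crossing over to a ballot containing $c$. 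Your $1$-Lipschitz potential $\phi(b) = \mathds{1}_{c\notin b}/|b|^{1/p}$ does exactly that missing work: verifying that $|\phi(b)-\phi(b')|$ is at most the cost of the edge $\{b,b'\}$ for the three edge types (with equality precisely on the crossing edges $\{b, b\cup\{c\}\}$), and summing along an arbitrary path, yields $d(v,v'') \ge 1/|v|^{1/p}$ for every $v'' \ni c$ with no analysis of path shape at all. What your route buys is rigor exactly where the paper is sketchiest, plus some portability: the potential argument works verbatim for any edge-cost function that is nonincreasing in the layer, so it would also cover the more general norms the paper's conclusion alludes to. Your closing observation that $p=\infty$ collapses $d$ to the symmetric-difference metric and recovers the $k=1$ case of Theorem \ref{thm:approval_voting_dist_rat} is also correct.
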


\begin{proof}
Fix an election $E$ and pick a candidate $c$. For each $v \in E$, $v' = v \cup \{c\}$ is the closest ballot containing $c$. If $c \in v$ then $v' = v$ so this is obvious. Otherwise, traversing from $v$ to $v'$ on $G$ requires just traversing up just one layer with a distance of $1/| v |^{1/p}$. For any other $v'' \ni c$, traversing from $v$ to $v''$ will require either traversing either down a layer (at a cost greater than $1/| v |^{1/p}$ since $z^p$ is increasing in $z$) or going up more than one layer (which includes a cost of $1/| v |^{1/p}$ and then more).

So, to find the nearest election $E' \in {\mathcal{U}}_{\{c\}}$, we need only add $c$ to each vote in $E$. The distance is

$$d(E, E')
= \sum_{i = 1}^{m}{\frac{1}{| v_i |^{1/p}} \mathds{1}_{c \notin v_i}}
= \sum_{i = 1}^{m}{\frac{1}{| v_i |^{1/p}} - \frac{1}{| v_i |^{1/p}} \mathds{1}_{c \in v_i}}
= \sum_{i = 1}^{m}{\frac{1}{| v_i |^{1/p}}} - s(c).$$

This is minimized by taking $c$ to be the candidate (or one of the candidates) with the highest score.
\end{proof}

Of course, one can question if the metric above is ``natural'' enough. As a rough justification, one can see this as a ``cost'' to adding a candidate to a vote. Given a vote, the cost of adding a candidate decreases as the number of candidates already in the vote increases. This makes sense if we consider that adding a candidate to a ``big'' vote is not changing it as much as adding a candidate to a ``small'' vote.

\subsection{Distance Rationalizability  for General k and p}

We now turn to the question of distance rationalizability for $k,p$-approval voting for $k > 1$ and $p < \infty$. However, as we will show, for $p < \infty$ and $k > 1$, $k,p$-approval voting is not vote distance rationalizable. Before we can do this, however, we need a lemma.

\begin{lemma}
\label{lem:delta_triangle}
Let $d$ be a metric on ballots, $E = \langle v_1, ..., v_m \rangle$ an election, and $U = \langle u_1, ..., u_n \rangle$ one of the nearest unanimous elections (with respect to $\hat{d}$). If we let $\tilde{E}$ be $E$ with one vote changed to match $U$ i.e. for some $\ell \in \{1, ..., m\}$, $\tilde{E} = \{v_1, ..., v_{\ell - 1}, u_l, v_{\ell + 1}, ..., v_m\}$ then $U$ is still among the closest unanimous elections to $\tilde{E}$.
\end{lemma}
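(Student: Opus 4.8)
The plan is to exploit the coordinate-wise structure of the induced metric $\hat{d}$, where $\hat{d}(E, E') = \sum_{i = 1}^{m}{d(v_i, v'_i)}$, together with the triangle inequality for $d$ applied to the single coordinate $\ell$ that gets modified. The guiding observation is that passing from $E$ to $\tilde{E}$ changes exactly one summand in every distance computation, so the whole argument localizes to the $\ell$-th coordinate.

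First I would fix an arbitrary unanimous election $W = \langle w_1, \ldots, w_m \rangle$ and reduce the claim to showing $\hat{d}(\tilde{E}, U) \leq \hat{d}(\tilde{E}, W)$; since $W$ is arbitrary, this is precisely the statement that $U$ is among the closest unanimous elections to $\tilde{E}$. Because $\tilde{E}$ agrees with $E$ outside coordinate $\ell$ and takes value $u_\ell$ there, I would record the two identities
$$\hat{d}(\tilde{E}, U) = \hat{d}(E, U) - d(v_\ell, u_\ell)$$
(the $\ell$-th term of $\hat{d}(\tilde{E}, U)$ vanishes because $d(u_\ell, u_\ell) = 0$) and
$$\hat{d}(\tilde{E}, W) = \hat{d}(E, W) - d(v_\ell, w_\ell) + d(u_\ell, w_\ell).$$

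Next I would invoke the hypothesis that $U$ is a nearest unanimous election to $E$, which gives $\hat{d}(E, U) \leq \hat{d}(E, W)$. Substituting this into the first identity and comparing with the second, the desired inequality $\hat{d}(\tilde{E}, U) \leq \hat{d}(\tilde{E}, W)$ collapses to
$$d(v_\ell, w_\ell) \leq d(v_\ell, u_\ell) + d(u_\ell, w_\ell),$$
which is exactly the triangle inequality for $d$ on the three ballots $v_\ell$, $u_\ell$, $w_\ell$. Since this holds for every unanimous $W$, the conclusion follows.

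I do not anticipate a genuine obstacle; the only care needed is bookkeeping which coordinate terms change and recognizing that the residual terms reassemble into the triangle inequality. One point I would state explicitly for cleanliness is that $\tilde{E}$ is itself a valid $m$-vote election, so the family of unanimous elections and the induced metric $\hat{d}(\tilde{E}, \cdot)$ are the same objects used for $E$, which is what legitimizes comparing $U$ against the same collection $\{W\}$.
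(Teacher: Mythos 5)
Your proof is correct and follows essentially the same route as the paper's: decompose $\hat{d}$ coordinate-wise, observe that only the $\ell$-th summand changes when passing from $E$ to $\tilde{E}$, and combine the triangle inequality at coordinate $\ell$ with the minimality of $U$ for $E$. If anything, your write-up is cleaner, since you invoke $\hat{d}(E,U) \leq \hat{d}(E,W)$ explicitly, a step the paper's displayed identities obscure behind apparent typos (writing $\hat{d}(E,U)$ where $\hat{d}(E,U')$ is meant).
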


\begin{proof}
Let $U' = \langle u'_1, ..., u'_m \rangle$ be another unanimous election. We show that $\hat{d}(\tilde{E}, U) \leq \hat{d}(\tilde{E}, U')$. We note that

$$\hat{d}(\tilde{E}, U) = \sum_{i = 1}^{\ell - 1}{d(v_1, u_i)} + \sum_{i = \ell + 1}^{m}{d(v_i, u_i)} = \hat{d}(E, U) - d(v_\ell, u_\ell),$$

and

$$\hat{d}(\tilde{E}, U') = \sum_{i = 1}^{\ell - 1}{d(v_1, u'_i)} + d(u_\ell, u'_\ell) + \sum_{i = \ell + 1}^{m}{d(v_i, u_i)} = \hat{d}(E, U) - d(v_\ell, u'_\ell) + d(u_\ell, u'_\ell)$$

Since $d$ is a metric it satisfies the triangle inequality we get

$$d(u_\ell, u'_\ell) + d(v_\ell, u_\ell) \geq d(v_\ell, u'_\ell) 
\Rightarrow - d(v_\ell, u_\ell) \leq - d(v_\ell, u'_\ell) + d(u_\ell, u'_\ell)$$

This means that

$$\hat{d}(\tilde{E}, U) = \hat{d}(E, U) - d(v_\ell, u_\ell) \leq \hat{d}(E, U) - d(v_\ell, u'_\ell) + d(u_\ell, u'_\ell) = \hat{d}(\tilde{E}, U').$$
\end{proof}

Now, we can prove our negative result.

\begin{thm}
\label{thm:norm_av_not_dr}
For $p \neq \infty$ and $k > 1$, $k,p$-approval voting is not vote distance rationalizable.
\end{thm}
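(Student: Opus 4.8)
The claim is a negative result: for $p \neq \infty$ and $k > 1$, $k,p$-approval voting is not vote distance rationalizable. Let me think about what this requires.

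Vote distance rationalizability (Definition \ref{def:vote_dist_rat}) means: there exists a metric $d$ on $\mathcal{I}$ such that $\mathcal{F}_{k,p}$ equals the $(\mathcal{U}, \hat{d})$-winners for the induced metric $\hat{d}(E,E') = \sum_i d(v_i, v'_i)$.

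To show this is impossible, I need to find some structural obstruction. The natural strategy for a negative result is: assume such a metric $d$ exists, derive constraints on $d$ from the requirement that it rationalizes $\mathcal{F}_{k,p}$ on various specific elections, and find that these constraints are contradictory.

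**What constraints does rationalizability impose?**

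For each election $E$, the $(\mathcal{U},\hat d)$-winners must equal $\mathcal{F}_{k,p}(E)$. For a committee $K$, its score is $d_{\mathcal{U}}(K;E) = \min_{E' \in \mathcal{U}_K} \hat{d}(E,E')$, and since $\hat d$ decomposes vote-by-vote, this equals $\sum_{i=1}^m \min_{u : K \subset u} d(v_i, u)$.

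So define, for each ballot $v$ and each committee $K$, the quantity $\delta_K(v) = \min_{u \supseteq K} d(v, u)$ = distance from $v$ to the nearest ballot containing $K$. Then the $(\mathcal{U},\hat d)$-score of $K$ on $E$ is $\sum_i \delta_K(v_i)$, and $\mathcal{F}_{k,p}(E) = \argmin_K \sum_i \delta_K(v_i)$.

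**Lemma \ref{lem:delta_triangle} is the key tool.** It says the nearest unanimous election stays nearest if we replace one vote by its image in $U$. This suggests that the quantity $\delta_K(v) = d_{\mathcal{U}_K}$-contribution is achieved by a consistent "projection" $v \mapsto u$.

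**My plan.**

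The plan is to argue by comparing the $k,p$-scores against the distance-rationalizability scores and finding that the former cannot be expressed as a sum of per-ballot, per-committee penalties $\delta_K(v)$ consistent with a single metric. Concretely:

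First, I would establish that if $\mathcal{F}_{k,p}$ is rationalized by $d$, then the "winning" condition $K \in \mathcal{F}_{k,p}(E)$ iff $\sum_i \delta_K(v_i)$ is minimal, must match the $\argmax$ of $\sum_{c \in K} s_p(c, E)$. This forces $\delta_K(v_i)$ to encode, up to additive/multiplicative constants that are the same across committees, the negative contribution $-\sum_{c \in K} [\text{score of } c \text{ from ballot } v_i]$. So I expect a relation of the form $\delta_K(v) = A(v) - \lambda \sum_{c \in K} \tfrac{\mathds{1}_{c \in v}}{|v|^{1/p}}$ for some ballot-dependent $A(v)$, constant $\lambda > 0$, at least when $K$ is achievable as a winner in enough elections.

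Second, and this is where $k > 1$ and $p < \infty$ bite, I would exploit that $\delta_K(v)$ must itself be a genuine "distance to the set $\{u : K \subseteq u\}$" under a fixed metric. For $k > 1$, a committee has multiple candidates, and the per-candidate score contributions $\mathds{1}_{c \in v}/|v|^{1/p}$ interact nonlinearly through the normalization $|v|^{1/p}$: the marginal score gain from adding a second approved candidate of $K$ to a ballot is not additive in the way a metric-induced $\delta_K$ would force. I would pick two or three explicit small profiles (e.g. on $n = k+1$ or $n = k+2$ candidates) where $\mathcal{F}_{k,p}$ makes a specific choice, write down the linear inequalities that $d$ must satisfy to reproduce those choices, and show the system is infeasible because $p < \infty$ makes $|v|^{1/p}$ strictly concave/nonlinear in $|v|$ while the metric constraints (triangle inequality via Lemma \ref{lem:delta_triangle}) demand a form of additive separability.

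**The main obstacle.** The hard part will be step two: pinning down exactly which finite collection of elections yields a contradiction, and showing the infeasibility cleanly rather than through a messy case analysis. I expect the cleanest route is to use Lemma \ref{lem:delta_triangle} to reduce the closest unanimous election to an explicit "projection" of each vote onto a $K$-containing ballot, obtain a closed form for $\delta_K(v)$ in terms of $d$-values between nearby ballots, and then exhibit two committees $K_1, K_2$ and two elections whose $k,p$-verdicts force contradictory inequalities on the same $d$-values — the contradiction arising precisely because for $k>1$ the score couples multiple candidates through the shared normalizer $|v|^{1/p}$, which has no counterpart in the additive, committee-by-committee structure that a vote-induced metric can express.
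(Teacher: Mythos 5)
Your setup is correct and you have located the right tool (Lemma \ref{lem:delta_triangle}) and the right intuition (the shared normalizer $|v|^{1/p}$ couples the candidates of a committee in a way no per-ballot metric can track). But the proposal stops short of a proof: both of your steps are stated as intentions, and the second one — exhibiting the finite collection of elections that yields contradictory constraints — is precisely the content of the theorem. As written, no contradiction is ever derived. Your step one also points in a direction that is harder than necessary and likely not provable as stated: rationalizability only constrains the $\argmin$ structure of $\sum_i \delta_K(v_i)$ across profiles, and this does not force the pointwise representation $\delta_K(v) = A(v) - \lambda\sum_{c\in K}\mathds{1}_{c\in v}/|v|^{1/p}$; the paper's proof needs no structural characterization of $d$ at all.

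What the paper does instead is a perturbation argument built on one concrete election. Take candidates $a_1,a_2,a_3,b_1,\dots,b_{k-2},\dots$ and an election $E$ with $m_1$ votes for $\{a_1,a_3,b_1,\dots,b_{k-2}\}$, $m_2$ votes for $\{a_1,b_1,\dots,b_{k-2}\}$, and $m_3$ votes for $\{a_2,b_1,\dots,b_{k-2}\}$, where $m_2$ is large and $m_3/k^{1/p} < m_1/k^{1/p} < m_3/(k-1)^{1/p}$. Then the unique $k,p$-winner is $K^* = \{a_1,a_2,b_1,\dots,b_{k-2}\}$, so any nearest unanimous election $U$ has consensus $K^*$. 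Now replace the $m_3$ votes by their images in $U$, forming $\tilde{E}$: since every ballot of $U$ contains the $k$-element set $K^*$, the replaced ballots have size at least $k$, so $a_2$'s score falls to at most $m_3/k^{1/p} < m_1/k^{1/p}$, i.e., below $a_3$'s — this is exactly where $k>1$ and $p<\infty$ enter — and $K^*$ is no longer a $k,p$-winner of $\tilde{E}$. Yet Lemma \ref{lem:delta_triangle} guarantees $U$ is still among the nearest unanimous elections to $\tilde{E}$, so $K^*$ must still be a $(\mathcal{U},\hat{d})$-winner; contradiction. The missing idea in your proposal is this mechanism: forcing ballots to contain a whole committee inflates $|v|$ and thereby flips the ranking of two \emph{other} candidates, while the metric side, pinned by the lemma, cannot respond.
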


\begin{proof}
We assume otherwise for the sake of a contradiction. Since there are at least $k + 1 \geq 3$ candidates, we name them

$$\mathcal{C} = \{a_1, a_2, a_3, b_1, ..., b_{k - 2}, c_1, ..., c_{n - k - 1}\}.$$

We setup an election, $E$, with the following votes:

\begin{itemize}
    \item $m_1$ votes for $\{a_1, a_3, b_1, ..., b_{k - 2}\}$;
    \item $m_2$ votes for $\{a_1, b_1, ..., b_{k - 2}\}$; and 
    \item $m_3$ votes for $\{a_2, b_1, ..., b_{k - 2}\}$
\end{itemize}

where we set $m_1, m_2, m_3 \in \Z_{\geq 0}$ such that $m_2$ is big enough that $a_1$ beats both $a_2$ and $a_3$ and

$$\frac{m_3}{k^{1/p}} < \frac{m_1}{k^{1/p}} < \frac{m_3}{(k - 1)^{1/p}}.$$

This is always possible since $z^p$ is strictly increasing in $z$. This means that in $E$, $a_1$ beats $a_2$ and $a_2$ beats $a_3$ and so $\mathcal{F}_{k,p}(E) = \{\{a_1, a_2, b_1, ..., b_{k - 2}\}\}$. Letting $U$ be the nearest election in $\mathcal{U}$ we know that its consensus is $\mathcal{F}_{k,p}(E)$.

Let $\tilde{E}$ be the election resulting from changing the last $m_3$ votes to match $U$. This means that these votes each have at least $k$ candidates which drops $a_2$'s score below $a_3$'s in $\tilde{E}$. By Lemma \ref{lem:delta_triangle}, the closest election should still be $U$. But, in $\tilde{E}$, $a_3$ beats $a_2$ so $U$ can't be the closest consensus election. This gives a contradiction.
\end{proof}

More succinctly, because adding a candidate to a vote decreases the score of all the candidates in that vote, this can change the outcomes for all the other candidates.

\section{Conclusion}

In this paper, we have looked at generalizations of approval voting where an approval ballot is viewed as analogous to a $0,1$-vector normalized by different $p$-norms. In many cases, any norm could, in fact, be used in place of a $p$-norm.

The idea that one cannot just add a candidate to an approval ``for free'' is a natural one - a person who approves of one candidate can be considered to want that candidate more than a person who votes for ten candidates wants any of those candidates. Alternatively, in a machine learning context, an agent that reports one possibility might be considered to have more confidence in that possibility that an agent that returns ten possibilities. The degree to which this desire, or confidence, decreases can be calibrated by picking the correct $p$-norm (or other norm). As such, it is helpful to see that there are several justifications for most of these methods.

The main exception to this in the paper is with respect to distance rationalizability for $k,p$-approval voting for $k > 1$ and $p < \infty$. The degree to which this is a problem depends on the context where a voting method is being used.

Looking into $k,p$-approval voting is a new area. Concurrent work in this area is looking into, among other things, the degree to which the outcome can change as $p$ changes, the Condorcet efficiency of $k,p$-approval voting for different $p$, the montonicity of $k,p$-approval voting as defined here as well as in an instant runoff context, variations where the vote is not a $0,1$-vector but rather an $a,b$-vector for some $0 < a < b$, etc.

\printbibliography

\end{document}